\newtheorem{thm}{Theorem}
\newtheorem{lm}[thm]{Lemma}
\theoremstyle{definition}
\newcommand{\be}{\begin}
\newcommand{\e}{\end}
\newcommand{\beq}{\begin{equation}}
\newcommand{\eeq}{\end{equation}}
\numberwithin{equation}{section}
\numberwithin{thm}{section}
\newcommand{\Z}{{\mathbb Z}}
\newcommand{\C}{{\mathbb C}}
\newcommand{\Lam}{{\Lambda}}
\newcommand{\gam}{{\gamma}}
\renewcommand{\l}{\left}
\renewcommand{\r}{\right}
\newcommand{\Hm}[1]{\leavevmode{\marginpar{\tiny%
$\hbox to 0mm{\hspace*{-0.5mm}$\leftarrow$\hss}%
\vcenter{\vrule depth 0.1mm height 0.1mm width \the\marginparwidth}%
\hbox to 0mm{\hss$\rightarrow$\hspace*{-0.5mm}}$\\\relax\raggedright
#1}}}
\begin{document}

 \title{Quantitatively improved finite-size criteria for spectral gaps} 
 \author[1]{Marius Lemm}
 \author[2]{David Xiang}

  \affil[1]{Department of Mathematics, University of Tübingen,  72076 Tübingen, Germany}
  \affil[2]{Harvard College, University Hall, Cambridge, MA 02138, USA}  
  \date{December 14, 2021}

\renewcommand\Affilfont{\itshape\small}
	 	 \maketitle
 \begin{abstract}
Finite-size criteria have emerged as an effective tool for deriving spectral gaps in higher-dimensional frustration-free quantum spin systems. We quantitatively improve the existing finite-size criteria by introducing a novel subsystem weighting scheme. The approach applies to Euclidean lattices of any dimension, the honeycomb lattice, and the triangular lattice.

%The key quantity determining the sensitivity of the finite-size criterion in applications is the local gap threshold. We devise a method for significantly improving the constants entering the local gap threshold which leads to more widely 

\end{abstract}
\maketitle

%\vspace*{-0.35in}
%\centerline{David Xiang}
%% Just so that your CA's can come knocking on your door when you don't hand in that problemset on time...
%\centerline{{{\tt davidxiang@college.harvard.edu}}}
%\vspace*{0.15in}
\section{Introduction}
In quantum lattice systems, the existence of a spectral gap above the ground state sector places strong constraints on the system's low-energy behavior. For instance, a spectral gap controls ground state correlation in various sense \cite{AAG,ALVV,H,NS} and is central to the classification of topological quantum phases \cite{BMNS,H_LSM}. However, the derivation of a spectral gap is often challenging because excitation energies over the entire Hilbert space need to be controlled. In the better-behaved class of frustration-free quantum spin systems, excitation energies are strictly local and so analytical tools for deriving gaps are available. The list of techniques includes the martingale method \cite{N} and finite-size criteria, which come in two main flavors: those studying the angle between subspaces based on a duality lemma of Fannes-Nachtergaele-Werner \cite{FNW}, see also \cite{SS}, and those distilled by Knabe \cite{K} from the highly influential work of Affleck-Kennedy-Lieb-Tasaki on their eponymous spin chain \cite{AKLT}. 

Starting with the work of Gosset-Mozgunov \cite{GM}, finite-size criteria have seen considerable progress in the last years, especially in the context of higher-dimensional quantum spin systems.
 This progress concerns both the methodological side \cite{Anshu,L2,LM} and the applications side \cite{Aetal,JL,L1,LN,LSW,NWY,PW1,PW2,WY}. Recent methodological advances have mostly focused on the scaling of the gap threshold (called $t_\ell$ in Eq.\ \eqref{eq:generic} below) in $\ell$, which is of essential theoretical importance. Less attention has been placed on producing thresholds with good constants. This can be a limiting factor when attempting the verification of a finite-size criterion in practice by an explicit calculation of the gap of a particular subsystem. \textit{The main goal of the present paper is to remedy this situation by substantially improving the constants entering the threshold $t_\ell$.}  
 
Our first main result concerns the gap thresholds on the Euclidean lattice in any dimension and is formally stated in Theorem \ref{thm:main} below. Our main contribution is a new weighting scheme which has the benefit of adapting seamlessly to dimensions $\geq 3$ where deriving the optimal threshold scaling $\ell^{-2}$ has so far been inaccessible to the weighting approach of \cite{GM} and instead required sophisticated tools form quantum information theory \cite{Anshu}. The heuristic idea underlying our weighting scheme is that tensor-like product structures allow for computationally efficient and dimensionally robust encodings of the weighting scheme, while at the same time allowing to effectively implement the principle that the weight of an interaction term should increase with its distance to the boundary of the subsystem. 

We also provide finite-size criteria for the honeycomb and triangular lattices in Theorems \ref{thm:mainhoney} and \ref{thm:maintri} respectively. For both of these lattices, we obtain for the first time the optimal inverse-square threshold scaling $\ell^{-2}$ which should be sharp in view of the excitation energy of spin wave trial states for the ferromagnetic Heisenberg Hamiltonians on these respective graphs.

\section{Setup and main results}
Consider a quantum spin system in which one qudit $\C^d$ is placed at each site of the box
$$
\Lam_L=([0,L]\cap\Z)^d
$$
so the total Hilbert space of the system is $\bigotimes_{x \in \Lambda_L} \mathbb{C}^d$. On it, we consider a Hamiltonian with nearest-neighbor interactions,
$$
H_L^{\mathrm{per}}=\sum_{x \in \Lambda_L}\sum_{1 \leq j \leq D} h_{x,e_j}
$$
and with periodic boundary conditions. The local interaction is given by
$$
h_{x,e_j} = h \otimes \mathrm{Id}_{\Lambda_L \backslash\{x, x+e_j\}}
$$
for a fixed projector $h:\C^d\otimes \C^d \to 	\C^d\otimes \C^d$. Note that the edges are directed and the local interaction $h_{x,e_j}$ is allowed to depend on the orientation of the edge. 
%which we assume satisfies $h(\varphi \otimes \psi)=\psi\otimes \varphi$, i.e., it acts symetrically on both sites
The fact that $h$ is identical across $\Lam_L$ makes $H_L$ a translation-invariant Hamiltonian. 
\be{ass}\label{ass:FF}
$H_{\Lam_L}$ is frustration-free for every $L\geq 1$, i.e., $\inf\mathrm{spec}\, H_{\Lam_L}=0$.
\e{ass}
We recall that there exist many physically relevant example of frustration-free Hamiltonians, e.g., ferromagnetic Heisenberg models, AKLT models, and Motzkin spin chains.
%We will work with translation-invariant frustration-free Hamiltonians on $\mathcal{H}_{\Lambda_N}$ defined by nearest-neighbor interactions. Given a pair of vertices $x,y \in \Lambda_N$, we write $x \sim y$ if they are connected by an edge in $\Z^D$. In this case, $x$ and $y$ are equal in all but the $j$th coordinate, and so without loss of generality we have $y = x+e_j$ as lattice points. Thus, the edge between $x$ and $y$ can be specified by the pair $(x,e_j)$, and to this end we define the nearest-neighbor interactionThe assumption that $H_N$ is frustration-free implies that the ground states of $H_N$ are also ground states of $h_{x,e_j}$, and in particular that $\ker H_N \neq \{0\}$. 
We denote by $\gamma^{\mathrm{per}}_L$ the spectral gap of $H_{}$, which by frustration-freeness equals the smallest positive eigenvalue of $H_{\Lam_L}$, i.e.,
$$
\gamma_L^{\mathrm{per}}=\inf \left(\mathrm{spec}H_{\Lam_L}\setminus \{0\}\right).
$$

For $\ell\geq 1$, we similarly define the Hamiltonian
$$
H_{\Lam_\ell} = \sum_{x \in \Lambda_\ell}\sum_{1 \leq j \leq D} h_{x,e_j}
$$
with open boundary conditions and we denote its spectral gap by $\gamma_\ell$. Note that Assumption \ref{ass:FF} ensures that $H_{\Lam_\ell}$ is frustration-free as well.

\subsection{Finite-size criteria on Euclidean lattices}
We briefly recall the existing finite-size criteria on Euclidean (or hypercubic) lattices of spatial dimension $D$ and state our main result. We consider finite-size criteria of the generic form
\beq\label{eq:generic}
\gam_L^{\mathrm{per}}\geq c_\ell \l(\gam_\ell -t_\ell\r)
\eeq
with $L$ sufficiently large compared to $\ell$. In practice, this can be applied as follows: If there exists a small system size $\ell$ such that by an explicit calculation one can show $\gam_\ell>t_\ell$, then one obtains $\gam_L^{\mathrm{per}}\geq c>0$ with $c$ independent of $L$, i.e., a gap in the infinite-volume limit. We see that the constant $c_\ell$ is largely irrelevant for the intended use of the criterion. By contrast, the numerical value of the so-called \textit{local gap threshold} $t_\ell$ is absolutely critical.

Finite-size criteria of the form \eqref{eq:generic} with the following thresholds $t_\ell$ have been derived. 

\be{itemize}
\item Knabe \cite{K} proved \eqref{eq:generic} in one dimension with threshold $t_\ell=\frac{1}{\ell}$.
\item Gosset-Mozgunov \cite{GM} proved \eqref{eq:generic} in one dimension with $t_\ell=\frac{6}{(\ell+1)(\ell+2)}\sim \frac{6}{\ell^2}$ for $\ell$ large and in two dimensions with threshold $t_\ell=\frac{8}{\ell^2}$ for slightly modified regions. See \cite{LM} for a related result for open boundary conditions.
\item \cite{L2} proved \eqref{eq:generic} in any dimension with threshold $t_\ell=\frac{3}{\ell}$. 
\item Anshu \cite{Anshu} proved \eqref{eq:generic} in any dimension with threshold $t_\ell=\frac{D^2 6^D 800 (4D-2)^2}{\ell^2}$.
\e{itemize}

Here, ``dimension'' always refers to the spatial dimension $D$. The local Hilbert space dimension $d$ can be arbitrary. Regarding conventions, we note that in the present paper, we use the convention that $\ell$ counts the number of bonds along each side of $\Lam_\ell$; this is $1$ less than the number of sites which is used in some of the works mentioned above.

There exist bounds on the best possible threshold asymptotic as $\ell\to\infty$. Namely, the spin-$1/2$ Heisenberg ferromagnet on the Euclidean lattice has spectral gap upper bounded by $1-\cos(\pi/\ell)\sim \frac{\pi^2}{2\ell^2}$ by spin wave theory. This shows that the threshold scaling necessarily satisfies $t_\ell\gtrsim \frac{\pi^2}{2\ell^2}\approx \frac{4.93}{\ell^2}$ as $\ell\to\infty$ in any dimension. In particular, the inverse-square behavior of $t_\ell$ that has been found is optimal.

Our first main result reads as follows.

\be{thm}[Main result on Euclidean lattices]\label{thm:main}
Assume that $H_{\Lam_L}$ is frustration-free. Let $D\geq 2$ and let $L \geq 5$, $2\leq \ell<L/2$. Then,
\begin{equation} \label{eqn:thm1}
    \gamma_L^{\mathrm{per}} \geq \left(\frac{5}{6}\right)^D \left( \gamma_\ell- t_\ell \right)
\end{equation}
where 
\beq\label{eq:tlmain}
t_\ell=\left(\frac{6}{5}\right)^D\left(\frac{5}{\ell^2}+\frac{300}{\ell^3}\r),\qquad \ell\geq 10.
\eeq
For $2\leq \ell\leq 9$, the values of $t_\ell$ are displayed  in Table \ref{tab:table1} for $D=2$ and in Table \ref{tab:table2} for $D=3$.
\e{thm}

\begin{table}[h]
  \begin{center}
    \begin{tabular}{c|cccccccc} % <-- Changed to S here.
        $\ell$       &2     & 3 &   4        &5     &6      &7      &8        &9\\ \hline
                $t_\ell$& 0.667 & 0.395 & 0.257 & 0.181 & 0.134 & 0.104 &0.082 & 0.067 \\ %\hline
        $\ell^2 t_\ell$& 2.667 & 3.547 & 4.102 &4.503 & 4.808    &5.049  &5.245  &  5.407
    \end{tabular}
        \caption{Values of $t_\ell$ for $D=2$ and $2\leq \ell\leq 9$.}
            \label{tab:table1}
  \end{center}
\end{table}

\begin{table}[h]
  \begin{center}
    \begin{tabular}{c|cccccccc}  % <-- Changed to S here.
        $\ell$ & 2& 3 & 4&5&6&7&8&9\\ \hline
        $t_\ell$ & .667 & 0.400 & 0.264 & 0.188 & 0.141 & 0.110 &0.088 & 0.073 \\ \hline
        $\ell^2t_\ell$ & 2.667 & 3.593 & 4.212&4.685  &5.061  &5.369   &5.627 &5.846
    \end{tabular}
     \caption{Values of $t_\ell$ for $D=3$ and $2\leq \ell\leq 9$.}
         \label{tab:table2}
  \end{center}
\end{table}

\be{rmk}\label{rmk:main}
\be{enumerate}[label=(\roman*)]
\item

In formula \eqref{eqn:thm1} for the threshold $t_\ell$, we took care to obtain the best leading asymptotic afforded by the proof method, namely $\left(\frac{6}{5}\right)^D 5\ell^{-2}$. By contrast, the error term is influenced by various choices made in the proof about how to precisely implement error estimates and accordingly depends directly on the minimal $\ell$ for which the result applies. The above choice of having a formula for $t_\ell$ valid for all $\ell\geq 10$ with constant $300$ in the error term is a compromise between these competing effects. 
\item 

For fixed small $\ell$, the thresholds can be significantly improved by using our procedure but optimizing over parameters well-adapted to the pertinent value of $\ell$. For the potential benefit of practitioners interested in applying our refined finite-size criterion to concrete systems, these significantly better thresholds for $2\leq \ell\leq 9$ are recorded separately in Tables \ref{tab:table1} and \ref{tab:table2}. Analogs of Tables \ref{tab:table1} and \ref{tab:table2} in dimensions $D\geq 4$ can also be computed by implementing the procedure laid out in the proof of Theorem \ref{thm:main}.
\item 
 In dimensions $D\geq 3$, the quantitative improvement afforded by Theorem \ref{thm:main} over existing thresholds that were reviewed above is substantial.
Also in 2D, we obtain a slight asymptotic improvement over the best previously known threshold which satisfies $\ell^2 t_\ell=8$ \cite{GM}. We obtain $\ell^2 t_\ell \sim \frac{36}{5}=7.2$ as $\ell\to\infty$. As the example of the Heisenberg ferromagnet shows, one cannot do better than $\ell^2 t_\ell\geq \frac{\pi^2}{2}\approx 4.93$ for $\ell$ large, so the room for further asymptotic improvements in 2D is limited.

%\begin{rmk}
%\be{enumerate}[label=(\roman*)]
%\item
%The best possible threshold scaling is $t_\ell=$ as can be seen from the quantum Heisenberg model. In particular, the inverse-square scaling of $t_\ell$ proved in \cite{A} and \eqref{eqn:thm1} is optimal. 
%
%\item The constant $...$ can be prohibitively large in concrete applications. Indeed, recall that exact calculation of the gap becomes exponentially more difficult as the linear system size $L$ increases, so it is not easy to harness the comparatively weak inverse-square decay in $\ell$. This practical consideration is our prime motivation for improving the constants entering $t_\ell$. 
%
%\item A secondary motivation is to settle the methodological question left open in \cite{}---is the weighted Knabe method even capable of obtaining the optimal threshold scaling in general dimension $D$? As we show here, the answer is yes, but it requires a completely different weighting scheme compared to \cite{} based on tensorization ideas as explained below.
%\e{enumerate}
%\e{rmk}
\e{enumerate}
\e{rmk}

\subsection{Discussion}
To place this result in context, we first recall that the work \cite{GM} introduced a refinement of Knabe's approach via subsystem Hamiltonians \cite{K} by suitably weighting the local interactions. The weighting scheme used in \cite{GM} works well in the two-dimensional setting considered there, but has so far resisted extension to dimensions $>2$ for technical reasons. By contrast, \cite{Anshu} proceeds completely differently in higher dimensions via an ingenious use of gap amplification and the detectability lemma from quantum information theory. Until now, \cite{Anshu} was the only proof of the optimal threshold scaling $t_\ell\sim \ell^{-2}$ in higher dimensions. A disadvantage of the detectability lemma approach is that it comes at the price of relatively large constants which can create difficulties in concrete applications where quantitative values matter. It has therefore been of practical interest to find a way to adapt the weighting method to higher dimensions.

 The present work shows that the weighting method can be modified to apply in higher dimensions. Our main contribution is to develop a novel weighting scheme which extends to higher dimensions. The main idea is to define the weights through a certain tensor-like product structure; see \eqref{eq:WBgrid} as well as Figures \ref{fig:WB3grid} and \ref{fig:WB3gridproduct}. Implementing this scheme yields the first proof of the optimal threshold scaling in higher dimensions that does not require the detectability lemma. The weighting scheme efficiently implements the general principle that boundary interactions receive lower weights. As a result, our finite-size criterion has the smallest thresholds $t_\ell$ that are available in the literature, which makes it easier to verify in concrete applications.

We close the discussion by listing possible extensions and an open problem.

\be{rmk}
\be{enumerate}[label=(\roman*)]
\item
The assumption that $h$ is a projector is mild. If $h$ is not a projector, we may scale it appropriately and then replace it with a projector onto the orthogonal complement of $\ker h$. The spectral gap of the new Hamiltonian differs only by a constant multiplicative factor from the old Hamiltonian. 
\item 
The proof also applies if the assumptions of rotation- and/or translation-invariance on the interaction are removed. The result is a modified finite-size criterion in which different rotations and/or translations of subsystems enter on the right-hand side.
\item It is an open problem, though one of purely theoretical nature, whether the threshold scaling $t_\ell\sim \ell^{-2}$ can be derived with dimension-independent constant as was possible for $\ell^{-1}$-scaling \cite{L2}.
\e{enumerate}
\e{rmk}

\subsection{Finite-size criterion for the honeycomb lattice}
Motivated by the AKLT conjecture that the honeycomb AKLT model exhibits a spectral gap \cite{AKLT}, Knabe investigated a finite-size criterion on the honeycomb lattice already in 1988 \cite{K}, albeit inconclusively. Recent years have seen a number of applications of finite-size criteria to the honeycomb and decorated honeycomb lattices \cite{Aetal,LSW,PW1,PW2}, in some cases combine with extensive numerical calculations. Here we derive for the first time a general finite-size criterion of the form \eqref{eq:generic} for the honeycomb lattice which has the expected inverse-square threshold scaling just like the Euclidean case studied in Theorem \ref{thm:main} but naturally with different constants.

To formulate the result, we let $\mathbb H_L$ denote the hexagonal lattice wrapped on an $L\times L$ torus. We write 
$$
H^{\mathrm{per}}_{\mathbb H_L}=\sum_{\substack{x,y\in \mathbb H_L\\ x\sim y}} h_{x,y}
$$
for the associated Hamiltonian. Since there is no natural orientation of the edges, we assume that the local interaction commutes with the swap operator defined by $S(|\psi\rangle\otimes |\phi\rangle)=(|\phi\rangle\otimes |\psi\rangle)$. We assume that $H^{\mathrm{per}}_{\mathbb H_L}$ is frustration-free and write $\gamma_L^{\mathrm{per}}$ for its spectral gap. 

The relevant finite subsystems are $\ell \times \ell$ ``slanted" grids of hexagons called $\mathcal B_{\ell}$ as depicted in Figure \ref{fig:hexslant} below. The associated Hamiltonians are denoted by $H_{\mathcal{B}_\ell}$ with spectral gaps $\gamma_\ell$. 

\vspace{.5cm}
\begin{figure}[h]
\begin{center}
\resizebox{4cm}{!}{
\begin{tikzpicture}
\node at (1.2990,-0.7500) [circle,fill,inner sep=1.5pt]{};
\node at (1.2990,0.7500) [circle,fill,inner sep=1.5pt]{};
\node at (0.0000,1.5000) [circle,fill,inner sep=1.5pt]{};
\node at (-1.2990,0.7500) [circle,fill,inner sep=1.5pt]{};
\node at (-1.2990,-0.7500) [circle,fill,inner sep=1.5pt]{};
\node at (-0.0000,-1.5000) [circle,fill,inner sep=1.5pt]{};
\node at (2.5981,1.5000) [circle,fill,inner sep=1.5pt]{};
\node at (2.5981,3.0000) [circle,fill,inner sep=1.5pt]{};
\node at (1.2990,3.7500) [circle,fill,inner sep=1.5pt]{};
\node at (-0.0000,3.0000) [circle,fill,inner sep=1.5pt]{};
\node at (0.0000,1.5000) [circle,fill,inner sep=1.5pt]{};
\node at (1.2990,0.7500) [circle,fill,inner sep=1.5pt]{};
\node at (3.8971,3.7500) [circle,fill,inner sep=1.5pt]{};
\node at (3.8971,5.2500) [circle,fill,inner sep=1.5pt]{};
\node at (2.5981,6.0000) [circle,fill,inner sep=1.5pt]{};
\node at (1.2990,5.2500) [circle,fill,inner sep=1.5pt]{};
\node at (1.2990,3.7500) [circle,fill,inner sep=1.5pt]{};
\node at (2.5981,3.0000) [circle,fill,inner sep=1.5pt]{};
\node at (3.8971,-0.7500) [circle,fill,inner sep=1.5pt]{};
\node at (3.8971,0.7500) [circle,fill,inner sep=1.5pt]{};
\node at (2.5981,1.5000) [circle,fill,inner sep=1.5pt]{};
\node at (1.2990,0.7500) [circle,fill,inner sep=1.5pt]{};
\node at (1.2990,-0.7500) [circle,fill,inner sep=1.5pt]{};
\node at (2.5981,-1.5000) [circle,fill,inner sep=1.5pt]{};
\node at (5.1962,1.5000) [circle,fill,inner sep=1.5pt]{};
\node at (5.1962,3.0000) [circle,fill,inner sep=1.5pt]{};
\node at (3.8971,3.7500) [circle,fill,inner sep=1.5pt]{};
\node at (2.5981,3.0000) [circle,fill,inner sep=1.5pt]{};
\node at (2.5981,1.5000) [circle,fill,inner sep=1.5pt]{};
\node at (3.8971,0.7500) [circle,fill,inner sep=1.5pt]{};
\node at (6.4952,3.7500) [circle,fill,inner sep=1.5pt]{};
\node at (6.4952,5.2500) [circle,fill,inner sep=1.5pt]{};
\node at (5.1962,6.0000) [circle,fill,inner sep=1.5pt]{};
\node at (3.8971,5.2500) [circle,fill,inner sep=1.5pt]{};
\node at (3.8971,3.7500) [circle,fill,inner sep=1.5pt]{};
\node at (5.1962,3.0000) [circle,fill,inner sep=1.5pt]{};
\node at (6.4952,-0.7500) [circle,fill,inner sep=1.5pt]{};
\node at (6.4952,0.7500) [circle,fill,inner sep=1.5pt]{};
\node at (5.1962,1.5000) [circle,fill,inner sep=1.5pt]{};
\node at (3.8971,0.7500) [circle,fill,inner sep=1.5pt]{};
\node at (3.8971,-0.7500) [circle,fill,inner sep=1.5pt]{};
\node at (5.1962,-1.5000) [circle,fill,inner sep=1.5pt]{};
\node at (7.7942,1.5000) [circle,fill,inner sep=1.5pt]{};
\node at (7.7942,3.0000) [circle,fill,inner sep=1.5pt]{};
\node at (6.4952,3.7500) [circle,fill,inner sep=1.5pt]{};
\node at (5.1962,3.0000) [circle,fill,inner sep=1.5pt]{};
\node at (5.1962,1.5000) [circle,fill,inner sep=1.5pt]{};
\node at (6.4952,0.7500) [circle,fill,inner sep=1.5pt]{};
\node at (9.0933,3.7500) [circle,fill,inner sep=1.5pt]{};
\node at (9.0933,5.2500) [circle,fill,inner sep=1.5pt]{};
\node at (7.7942,6.0000) [circle,fill,inner sep=1.5pt]{};
\node at (6.4952,5.2500) [circle,fill,inner sep=1.5pt]{};
\node at (6.4952,3.7500) [circle,fill,inner sep=1.5pt]{};
\node at (7.7942,3.0000) [circle,fill,inner sep=1.5pt]{};
\draw(-1.2990, -0.7500)--(-0.0000, -1.5000);
\draw(0.0000, 1.5000)--(1.2990, 0.7500);
\draw(1.2990, 3.7500)--(2.5981, 3.0000);
\draw(2.5981, 6.0000)--(3.8971, 5.2500);
\draw(1.2990, -0.7500)--(2.5981, -1.5000);
\draw(2.5981, 1.5000)--(3.8971, 0.7500);
\draw(3.8971, 3.7500)--(5.1962, 3.0000);
\draw(5.1962, 6.0000)--(6.4952, 5.2500);
\draw(3.8971, -0.7500)--(5.1962, -1.5000);
\draw(5.1962, 1.5000)--(6.4952, 0.7500);
\draw(6.4952, 3.7500)--(7.7942, 3.0000);
\draw(7.7942, 6.0000)--(9.0933, 5.2500);
\draw(-1.2990, -0.7500)--(-1.2990, 0.7500);
\draw(0.0000, 1.5000)--(-0.0000, 3.0000);
\draw(1.2990, 3.7500)--(1.2990, 5.2500);
\draw(1.2990, -0.7500)--(1.2990, 0.7500);
\draw(2.5981, 1.5000)--(2.5981, 3.0000);
\draw(3.8971, 3.7500)--(3.8971, 5.2500);
\draw(3.8971, -0.7500)--(3.8971, 0.7500);
\draw(5.1962, 1.5000)--(5.1962, 3.0000);
\draw(6.4952, 3.7500)--(6.4952, 5.2500);
\draw(6.4952, -0.7500)--(6.4952, 0.7500);
\draw(7.7942, 1.5000)--(7.7942, 3.0000);
\draw(9.0933, 3.7500)--(9.0933, 5.2500);
\draw(-1.2990, 0.7500)--(-0.0000, 1.5000);
\draw(-0.0000, 3.0000)--(1.2990, 3.7500);
\draw(1.2990, 5.2500)--(2.5981, 6.0000);
\draw(-0.0000, -1.5000)--(1.2990, -0.7500);
\draw(1.2990, 0.7500)--(2.5981, 1.5000);
\draw(2.5981, 3.0000)--(3.8971, 3.7500);
\draw(3.8971, 5.2500)--(5.1962, 6.0000);
\draw(2.5981, -1.5000)--(3.8971, -0.7500);
\draw(3.8971, 0.7500)--(5.1962, 1.5000);
\draw(5.1962, 3.0000)--(6.4952, 3.7500);
\draw(6.4952, 5.2500)--(7.7942, 6.0000);
\draw(5.1962, -1.5000)--(6.4952, -0.7500);
\draw(6.4952, 0.7500)--(7.7942, 1.5000);
\draw(7.7942, 3.0000)--(9.0933, 3.7500);
\end{tikzpicture}
}
\caption{The slanted grid $\mathcal B_\ell$ with $\ell=3$.}.
\label{fig:hexslant}
\end{center}
\end{figure}
\vspace{-.5cm}

The finite-size criterion on the honeycomb lattice then reads as follows.

\begin{thm}[Main result on the honeycomb lattice]\label{thm:mainhoney}
Let $\ell \geq 3, L > 2\ell$. Then,
\begin{equation} \label{eqn:hexagongap}
\gamma_L^{\mathrm{per}} \geq \frac{1}{2}\left( \gamma_\ell -t_\ell\right)
\end{equation}
where 
\beq
t_\ell= \frac{228}{55\ell^2} + \frac{108}{\ell^3},\qquad \ell\geq 10.
\eeq
For $3\leq \ell\leq 9$, the thresholds $t_\ell$ are displayed in Table \ref{tab:table4}.
\end{thm}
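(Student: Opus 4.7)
The plan is to adapt the tensor-product weighting strategy developed for Euclidean lattices in Theorem \ref{thm:main} to the honeycomb geometry. The key geometric observation is that the slanted subsystem $\mathcal{B}_\ell$ is aligned with the two primitive translation directions of the hexagonal Bravais lattice, so it inherits an effective $\ell\times\ell$ product structure. This makes the tensor-product weighting scheme directly transferable, with only honeycomb-specific bookkeeping needed for the six edges within each hexagonal cell.

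For every translate $\mathcal{B}$ of $\mathcal{B}_\ell$ inside the torus $\mathbb{H}_L$, I would introduce a weighted subsystem Hamiltonian $H_{\mathcal{B}}^W=\sum_{e\subset\mathcal{B}} w_{\mathcal{B},e}\, h_e$ whose weights factor as a product of two one-dimensional parabolic-type profiles, one along each Bravais direction, that damp toward $\partial\mathcal{B}$. Combining the subsystem gap hypothesis $(H_{\mathcal{B}_\ell})^2\geq \gamma_\ell H_{\mathcal{B}_\ell}$ with the elementary comparison $H_{\mathcal{B}}^W\geq w_{\min}H_{\mathcal{B}}$ yields $(H_{\mathcal{B}}^W)^2\geq w_{\min}\gamma_\ell H_{\mathcal{B}}^W$. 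Summing over all translates and invoking translation invariance on the torus produces $\sum_{\mathcal{B}} H_{\mathcal{B}}^W=C_1 H^{\mathrm{per}}_{\mathbb{H}_L}$ for an explicit constant $C_1=C_1(\ell)$, which gives a linear lower bound in $H^{\mathrm{per}}_{\mathbb{H}_L}$.

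Next I would expand $\sum_{\mathcal{B}}(H_{\mathcal{B}}^W)^2=\sum_{\mathcal{B}}\sum_{e,e'\subset\mathcal{B}} w_{\mathcal{B},e} w_{\mathcal{B},e'} h_e h_{e'}$. The diagonal contribution $e=e'$ yields another multiple of $H^{\mathrm{per}}_{\mathbb{H}_L}$ via $h_e^2=h_e$. Off-diagonal terms with disjoint vertex support are products of commuting projectors, hence nonnegative, and may be discarded. The remaining noncommuting cross terms (edges sharing a vertex) are engineered to reconstitute the corresponding terms of $(H^{\mathrm{per}}_{\mathbb{H}_L})^2$ modulo boundary corrections. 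The tensor-product weights are tuned so that, after summation over translates, every such edge pair receives the same total weight $C_2$; combining with the torus frustration-free bound $(H^{\mathrm{per}}_{\mathbb{H}_L})^2\geq \gamma_L^{\mathrm{per}} H^{\mathrm{per}}_{\mathbb{H}_L}$ then assembles \eqref{eqn:hexagongap}, with the prefactor $1/2$ emerging from the ratio $w_{\min}C_1/(2C_2)$ after optimization over the free parameters of the weight profile.

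The principal obstacle is the combinatorial analysis needed to extract the specific constant $228/55$. Each hexagon carries six edges distributed across three edge-orientation classes, and adjacent edge pairs realize multiple distinct relative configurations within and across cells; each must be counted against the tensor-product weights to verify the uniform-coverage identities for $C_1$ and $C_2$, with $O(\ell^{-1})$ deviations arising from the cells adjoining $\partial\mathcal{B}_\ell$. Optimizing the one-dimensional weight profile subject to these constraints is what pins down $228/55$ as the leading coefficient in $\ell^{-2}$, while the $108/\ell^3$ correction absorbs the residual boundary errors from the $O(\ell)$ peripheral cells.
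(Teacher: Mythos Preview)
Your overall architecture is right, but there is a genuine gap at the step where you pass from the subsystem gap to the weighted subsystem gap. You write that $H_{\mathcal{B}}^W\geq w_{\min}H_{\mathcal{B}}$ yields $(H_{\mathcal{B}}^W)^2\geq w_{\min}\gamma_\ell H_{\mathcal{B}}^W$. This is correct as an operator inequality, but it is fatal quantitatively: for parabolic weights of the form $c_j\sim j(\ell-j)$, the corner weight $w_{\min}\sim c_0d_0$ is smaller than the average weight by a factor of order $\ell^2$. Tracing this through your final ratio $w_{\min}C_1/(2C_2)$ degrades the threshold from $\ell^{-2}$ to order $1$, so the inverse-square scaling is lost entirely. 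The paper explicitly flags this point (``the naive estimate involving the spectral gap of the unweighted operator times the smallest coefficient $\min\{c_0,d_0\}$ is insufficient'') and replaces it by a translation-averaging lemma: one restricts to a first excited state $|\phi\rangle$ of $H^{\mathrm{per}}_{\mathbb H_L}$ that is simultaneously an eigenvector of all lattice translations, so that $\langle\phi|h_e|\phi\rangle$ is constant along each edge class and the weights can be averaged rather than minimized. This replaces $w_{\min}$ by the mean weight and is precisely what recovers the $\ell^{-2}$ scaling with the constant $228/55$.

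A second missing ingredient is honeycomb-specific. The three edge orientations of $\mathbb H_L$ do not receive equal diagonal weights under a single copy of the tensor-product scheme on the slanted grid $\mathcal B_\ell$; the paper remedies this by summing over the three $60^\circ$ rotations $W_{B_t}^{(0)},W_{B_t}^{(1)},W_{B_t}^{(2)}$ before comparing to $H_L^2$. Relatedly, your assertion that after summation ``every adjacent edge pair receives the same total weight $C_2$'' is not achievable: different adjacent-edge shapes acquire distinct effective coefficients, and the argument requires arranging that the maximal one dominates the non-adjacent remainder and then controlling the smaller ones via positivity or operator Cauchy--Schwarz. Without these two refinements, the combinatorics you describe will not close to yield the stated threshold.
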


\begin{table}[h]
  \begin{center}
    \begin{tabular}{c|cccccccc|}% <-- Changed to S here.
        $\ell$ & 3 & 4&5&6&7&8&9\\ \hline
        $t_\ell$ & 0.246 & 0.161 & 0.113 & 0.084 & 0.065 &0.051 & 0.042 \\ \hline
     $\ell^2t_\ell$ & 2.212 & 2.576 & 2.824 & 3.003 & 3.140 & 3.247   & 3.333
    \end{tabular}
    \caption{Thresholds $t_\ell $  on the honeycomb lattice for small values of $\ell$.}
        \label{tab:table3}
  \end{center}
\end{table}

\be{rmk}
\be{enumerate}[label=(\roman*)]
\item

Analogously to the discussion in Remark \ref{rmk:main} (i), the restriction to $ \ell\geq 10$ arises from a desire to have a reasonable constant in the error term (here $108$), while the numbers obtained for $3\leq \ell\leq 9$ arise through a more targeted optimization procedure of the parameters chosen to define the weighting scheme in the proof.

\item
Concerning the best possible threshold on the honeycomb lattice, we can again consider spin wave trial states in the Heisenberg ferromagnet. We performed numerics that then predict a gap scaling  $\approx \frac{0.9}{\ell^2}$ which, if sharp, would leave the possibility of further improvement of the constant $\frac{228}{55}$.

\e{enumerate}
\e{rmk}

\subsection{Finite-size criterion for the triangular lattice}
We obtain an analogous result for the triangular lattice. We let $\mathbb T_L$ denote the triangular lattice wrapped on an $L\times L$ torus and write $H^{\mathrm{per}}_{\mathbb T_L}$ for the associated Hamiltonian, which we assume to be frustration-free, and $\gamma_L^{\mathrm{per}}$ for its spectral gap. The relevant finite subsystems are $\ell \times \ell$ ``slanted" grids of triangles as shown in Figure \ref{fig:B3tri}. The associated subsystem Hamiltonians are denoted by $H_{\mathcal{B}_\ell}$ with gaps $\gamma_\ell$. 

\vspace{.5cm}
\begin{figure}[h]
\begin{center}
\resizebox{4cm}{!}{
 \begin{tikzpicture}
\node at (2.5981,-0.7500) [circle,fill,inner sep=1.5pt]{};
\node at (1.2990,1.5000) [circle,fill,inner sep=1.5pt]{};
\node at (0.0000,-0.7500) [circle,fill,inner sep=1.5pt]{};
\node at (2.5981,-0.7500) [circle,fill,inner sep=1.5pt]{};
\node at (1.2990,1.5000) [circle,fill,inner sep=1.5pt]{};
\node at (-0.0000,-0.7500) [circle,fill,inner sep=1.5pt]{};
\node at (3.8971,1.5000) [circle,fill,inner sep=1.5pt]{};
\node at (2.5981,3.7500) [circle,fill,inner sep=1.5pt]{};
\node at (1.2990,1.5000) [circle,fill,inner sep=1.5pt]{};
\node at (3.8971,1.5000) [circle,fill,inner sep=1.5pt]{};
\node at (2.5981,3.7500) [circle,fill,inner sep=1.5pt]{};
\node at (1.2990,1.5000) [circle,fill,inner sep=1.5pt]{};
\node at (5.1962,3.7500) [circle,fill,inner sep=1.5pt]{};
\node at (3.8971,6.0000) [circle,fill,inner sep=1.5pt]{};
\node at (2.5981,3.7500) [circle,fill,inner sep=1.5pt]{};
\node at (5.1962,3.7500) [circle,fill,inner sep=1.5pt]{};
\node at (3.8971,6.0000) [circle,fill,inner sep=1.5pt]{};
\node at (2.5981,3.7500) [circle,fill,inner sep=1.5pt]{};
\node at (5.1962,-0.7500) [circle,fill,inner sep=1.5pt]{};
\node at (3.8971,1.5000) [circle,fill,inner sep=1.5pt]{};
\node at (2.5981,-0.7500) [circle,fill,inner sep=1.5pt]{};
\node at (5.1962,-0.7500) [circle,fill,inner sep=1.5pt]{};
\node at (3.8971,1.5000) [circle,fill,inner sep=1.5pt]{};
\node at (2.5981,-0.7500) [circle,fill,inner sep=1.5pt]{};
\node at (6.4952,1.5000) [circle,fill,inner sep=1.5pt]{};
\node at (5.1962,3.7500) [circle,fill,inner sep=1.5pt]{};
\node at (3.8971,1.5000) [circle,fill,inner sep=1.5pt]{};
\node at (6.4952,1.5000) [circle,fill,inner sep=1.5pt]{};
\node at (5.1962,3.7500) [circle,fill,inner sep=1.5pt]{};
\node at (3.8971,1.5000) [circle,fill,inner sep=1.5pt]{};
\node at (7.7942,3.7500) [circle,fill,inner sep=1.5pt]{};
\node at (6.4952,6.0000) [circle,fill,inner sep=1.5pt]{};
\node at (5.1962,3.7500) [circle,fill,inner sep=1.5pt]{};
\node at (7.7942,3.7500) [circle,fill,inner sep=1.5pt]{};
\node at (6.4952,6.0000) [circle,fill,inner sep=1.5pt]{};
\node at (5.1962,3.7500) [circle,fill,inner sep=1.5pt]{};
\node at (7.7942,-0.7500) [circle,fill,inner sep=1.5pt]{};
\node at (6.4952,1.5000) [circle,fill,inner sep=1.5pt]{};
\node at (5.1962,-0.7500) [circle,fill,inner sep=1.5pt]{};
\node at (7.7942,-0.7500) [circle,fill,inner sep=1.5pt]{};
\node at (6.4952,1.5000) [circle,fill,inner sep=1.5pt]{};
\node at (5.1962,-0.7500) [circle,fill,inner sep=1.5pt]{};
\node at (9.0933,1.5000) [circle,fill,inner sep=1.5pt]{};
\node at (7.7942,3.7500) [circle,fill,inner sep=1.5pt]{};
\node at (6.4952,1.5000) [circle,fill,inner sep=1.5pt]{};
\node at (9.0933,1.5000) [circle,fill,inner sep=1.5pt]{};
\node at (7.7942,3.7500) [circle,fill,inner sep=1.5pt]{};
\node at (6.4952,1.5000) [circle,fill,inner sep=1.5pt]{};
\node at (10.3923,3.7500) [circle,fill,inner sep=1.5pt]{};
\node at (9.0933,6.0000) [circle,fill,inner sep=1.5pt]{};
\node at (7.7942,3.7500) [circle,fill,inner sep=1.5pt]{};
\node at (10.3923,3.7500) [circle,fill,inner sep=1.5pt]{};
\node at (9.0933,6.0000) [circle,fill,inner sep=1.5pt]{};
\node at (7.7942,3.7500) [circle,fill,inner sep=1.5pt]{};
\node at (11.6913,6.0000) [circle,fill,inner sep=1.5pt]{};
\draw(0.0000, -0.7500)--(2.5981, -0.7500);
\draw(1.2990, 1.5000)--(3.8971, 1.5000);
\draw(2.5981, 3.7500)--(5.1962, 3.7500);
\draw(3.8971, 6.0000)--(6.4952, 6.0000);
\draw(2.5981, -0.7500)--(5.1962, -0.7500);
\draw(3.8971, 1.5000)--(6.4952, 1.5000);
\draw(5.1962, 3.7500)--(7.7942, 3.7500);
\draw(6.4952, 6.0000)--(9.0933, 6.0000);
\draw(5.1962, -0.7500)--(7.7942, -0.7500);
\draw(6.4952, 1.5000)--(9.0933, 1.5000);
\draw(7.7942, 3.7500)--(10.3923, 3.7500);
\draw(9.0933, 6.0000)--(11.6913, 6.0000);
\draw(1.2990, 1.5000)--(0.0000, -0.7500);
\draw(2.5981, 3.7500)--(1.2990, 1.5000);
\draw(3.8971, 6.0000)--(2.5981, 3.7500);
\draw(3.8971, 1.5000)--(2.5981, -0.7500);
\draw(5.1962, 3.7500)--(3.8971, 1.5000);
\draw(6.4952, 6.0000)--(5.1962, 3.7500);
\draw(6.4952, 1.5000)--(5.1962, -0.7500);
\draw(7.7942, 3.7500)--(6.4952, 1.5000);
\draw(9.0933, 6.0000)--(7.7942, 3.7500);
\draw(9.0933, 1.5000)--(7.7942, -0.7500);
\draw(10.3923, 3.7500)--(9.0933, 1.5000);
\draw(11.6913, 6.0000)--(10.3923, 3.7500);
\draw(1.2990, 1.5000)--(2.5981, -0.7500);
\draw(2.5981, 3.7500)--(3.8971, 1.5000);
\draw(3.8971, 6.0000)--(5.1962, 3.7500);
\draw(3.8971, 1.5000)--(5.1962, -0.7500);
\draw(5.1962, 3.7500)--(6.4952, 1.5000);
\draw(6.4952, 6.0000)--(7.7942, 3.7500);
\draw(6.4952, 1.5000)--(7.7942, -0.7500);
\draw(7.7942, 3.7500)--(9.0933, 1.5000);
\draw(9.0933, 6.0000)--(10.3923, 3.7500);
\end{tikzpicture}
}
\caption{The slanted triangular grid $\mathcal B_\ell$ with $\ell=3$.}.
\label{fig:B3tri}
\end{center}
\end{figure}
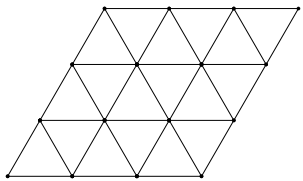
\vspace{-.5cm}

The finite-size criterion on the triangular lattice then reads as follows.

\begin{thm}[Main result on triangular lattice]
\label{thm:maintri}
\label{eqn:trianglegap}
Let $\ell \geq 3, L > 2\ell$. Then,
\begin{equation}
\gamma_L^{\mathrm{per}} \geq \frac{1}{2}\left(\gamma_\ell - t_\ell \right).
\end{equation}
where 
\beq
t_\ell =\frac{144}{5\ell^2} + \frac{432}{\ell^3},\qquad \ell\geq 10.
\eeq
For $3\leq \ell\leq 9$, the thresholds $t_\ell$ are displayed in Table \ref{tab:table4}.
\end{thm}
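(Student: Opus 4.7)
The proof follows the same paradigm as Theorems \ref{thm:main} and \ref{thm:mainhoney}: engineer a weighted superposition of subsystem Hamiltonians whose translates cover $\mathbb T_L$ uniformly, square this superposition and match the result to $(H^{\mathrm{per}}_{\mathbb T_L})^2$ through a combinatorial identity, and then extract the gap bound via frustration-freeness combined with a spectral comparison on $\ker(H^{\mathrm{per}}_{\mathbb T_L})^\perp$.

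First, I would introduce a weighting on the edges of the slanted grid $\mathcal B_\ell$ that is largest in the bulk and tapers to zero at the boundary. Using the tensor-like product idea driving Theorem \ref{thm:main}, the three orientation classes of edges (horizontal, NE-slanting, NW-slanting) are each parameterized by coordinates $(i,j)\in\{1,\dots,\ell\}^2$, and the weight is taken as $w_e=f(i)g(j)$ for polynomial bumps $f,g$ that vanish at the endpoints and whose moments are tuned so that the leading coefficient comes out to $\tfrac{144}{5}$. Setting $H^w_{\mathcal B_\ell}:=\sum_e w_e h_e$, I would verify the translational coverage identity $\sum_t H^w_{\mathcal B_\ell+t}=c\,H^{\mathrm{per}}_{\mathbb T_L}$ on $\mathbb T_L$, with the sum running over admissible translates of $\mathcal B_\ell$ and $c$ determined by the total weight per edge. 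Then I would estimate $\sum_t(H^w_{\mathcal B_\ell+t})^2$ in two different ways: a lower bound coming from the frustration-free inequality $H_{\mathcal B_\ell+t}^2\geq \gamma_\ell H_{\mathcal B_\ell+t}$ after incorporating the weights, and a direct expansion into diagonal terms (using $h_e^2=h_e$) and cross terms $h_e h_{e'}+h_{e'}h_e$ for edges $e,e'$ sharing a triangle. Matching the cross-term contribution to $c^2(H^{\mathrm{per}}_{\mathbb T_L})^2$ up to an error proportional to $H^{\mathrm{per}}_{\mathbb T_L}$ with coefficient of order $t_\ell$ is the crux of the argument.

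Combining the two evaluations yields an operator inequality of the form $(H^{\mathrm{per}}_{\mathbb T_L})^2\geq \tfrac{1}{2}(\gamma_\ell-t_\ell)H^{\mathrm{per}}_{\mathbb T_L}$, and spectral mapping (any eigenvalue $\lambda>0$ satisfies $\lambda^2\geq \tfrac{1}{2}(\gamma_\ell-t_\ell)\lambda$, hence $\lambda\geq \tfrac{1}{2}(\gamma_\ell-t_\ell)$) delivers the bound on $\gamma_L^{\mathrm{per}}$. I expect the main obstacle to be the cross-term bookkeeping: on the triangular lattice each edge lies in two triangles, and each triangle produces three pairwise cross terms, so six pairs of $h_e h_{e'}$ contributions per edge must be reconciled with $(H^{\mathrm{per}}_{\mathbb T_L})^2$ — a richer combinatorics than in the square or honeycomb cases. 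This enlarged adjacency is precisely what pushes the asymptotic constant from $\tfrac{36}{5\ell^2}$ in Theorem \ref{thm:main} for $D=2$ up to $\tfrac{144}{5\ell^2}$, and the $\tfrac{432}{\ell^3}$ subleading term captures boundary deficits coming from the tapered profile of $f,g$. The sharper values for $3\leq \ell\leq 9$ recorded in Table \ref{tab:table4} would then be obtained by numerically optimizing the parameters of $f,g$ individually at each small $\ell$, rather than using the asymptotic polynomial ansatz valid for $\ell\geq 10$.
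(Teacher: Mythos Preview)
Your outline captures the high-level architecture but has a genuine gap at the step ``a lower bound coming from the frustration-free inequality $H_{\mathcal B_\ell+t}^2\geq \gamma_\ell H_{\mathcal B_\ell+t}$ after incorporating the weights.'' The weighted Hamiltonian $H^w_{\mathcal B_\ell}$ does \emph{not} satisfy $(H^w_{\mathcal B_\ell})^2\geq \gamma_\ell\,(\text{average weight})\,H^w_{\mathcal B_\ell}$ as an operator inequality; the best generic operator bound is $(H^w_{\mathcal B_\ell})^2\geq \gamma_\ell\,(\min_e w_e)\,H^w_{\mathcal B_\ell}$. If the weights taper toward zero at the boundary (as you say), or even if they merely vary by an $O(\ell^2)$ ratio between center and boundary as in the quadratic profile used here, this naive estimate forfeits exactly the gain that the weighting was designed to produce, and you are back at a Knabe-type $t_\ell\sim\ell^{-1}$ threshold.

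The paper does not derive an operator inequality at this step. Instead, following Gosset--Mozgunov, it selects a specific first-excited state $|\phi\rangle$ of $H^{\mathrm{per}}_{\mathbb T_L}$ that is simultaneously an eigenvector of all lattice translations; then $\langle\phi|h_e|\phi\rangle$ is constant across edges of a fixed orientation, which lets one replace each individual weight by the \emph{average} weight when computing $\langle\phi|W_{B_t}|\phi\rangle$ (the triangular analog of Lemma~\ref{lm:translationgrid}). Consequently the inequality you state as the endpoint, $(H^{\mathrm{per}}_{\mathbb T_L})^2\geq \tfrac12(\gamma_\ell-t_\ell)H^{\mathrm{per}}_{\mathbb T_L}$, is never established as an operator inequality---only its expectation in $|\phi\rangle$ is, which suffices for the gap bound but requires this additional idea that your plan omits.

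Two secondary points. First, the paper also averages over the three $60^\circ$ rotations of $\mathcal B_\ell$, not just translations: the product weighting scheme assigns different total weights to the three edge orientations (e.g.\ diagonals receive $c_ic_j$ while horizontals receive $c_id_j$), so without rotation averaging the covering sum is not a scalar multiple of $H^{\mathrm{per}}_{\mathbb T_L}$. Second, the paper's weights do not vanish at the boundary ($c_0=\ell>0$); if yours literally ``taper to zero'' then $\ker H^w_{\mathcal B_\ell}\supsetneq \ker H_{\mathcal B_\ell}$ and the link to $\gamma_\ell$ is broken outright.
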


%As before, the error term can be significantly improved for small $\ell$ by explicit parameter optimization; see Table \ref{tab:table4}.

\begin{table}[!ht]
  \begin{center}
    %\caption{Finite Size Criterion for the triangular lattice.}
    \begin{tabular}{c|cccccccc|} % <-- Changed to S here.
        $\ell$ & 3 & 4&5&6&7&8&9\\ \hline
        $t_\ell$ & 1.318 & 0.872 & 0.628 & 0.476 & 0.373 &0.301 & 0.248 \\ \hline
     $\ell^2 t_\ell$ & 11.861 & 13.946 & 15.700 & 17.113 & 18.263 & 19.213   &  20.010
    \end{tabular}
        \caption{Thresholds for the finite-size criterion on the triangular lattice for small values of $\ell$.}
            \label{tab:table4}
  \end{center}
\end{table}

We remark that it is no coincidence that the constants appearing in Theorem \ref{thm:maintri} are comparatively worse than in the previous two theorems. The underlying reason is that the triangular lattice has a relatively large number of bonds incident at each site and thus has a large number of locally competing interactions. This makes it comparatively difficult to control the sum of anticommutators corresponding to adjacent edges that arise in the proof from squaring the weighted Hamiltonian. As a benchmark for assessing the sharpness of the finite-size criterion, we note using spin wave trial states for the ferromagnetic Heisenberg model with $S = \frac{1}{2}$ together with numerical computation of eigenvalues of the discrete Laplacian suggest a gap scaling of $\frac{5.3}{\ell^2}$ for large $\ell$, which, if sharp, would still leave room for improvement of the constant $\frac{144}{5}$ that we find on the triangular lattice.

\section{Proof of Theorem \ref{thm:main} on Euclidean lattices}
\subsection{Construction of weighted subsystem Hamiltonians}\label{ssect:coefficients}
Let $D\geq 2$. We will establish \eqref{eqn:thm1} by studying weighted Hamiltonians on $\mathcal{B}_\ell$. This is inspired by the approach of \cite{GM}, though, the weighting scheme is completely different here. The weighted Hamiltonian $W_{\mathcal{B}_\ell}$, is defined as follows. We use two families of coefficients $c_0, \dotsc c_{\ell-1}$ and $d_0, \dotsc d_{\ell}$ (to be further specified later) satisfying the following requirements:
\begin{itemize}
    \item[(i)] The $c_i$ and $d_i$ are positive.
    \item[(ii)] The $c_i,$ and $d_i$ are symmetric around the midpoint, i.e., $c_i = c_{\ell-1-i}$ and $d_i = d_{\ell-i}$. 
    \item[(iii)]The $c_i$ and $d_i$ are increasing up to the midpoint. That is, $c_i \leq c_{i+1}$ for $0\leq i \leq \lfloor \frac{\ell-1}{2}\rfloor$, and $d_i \leq d_{i+1}$ for $0 \leq i \leq \lfloor \frac{\ell}{2}\rfloor$. 
\end{itemize}
Then, we define the weighted subsystem Hamiltonians as
\beq\label{eq:WBgrid}
W_{\mathcal{B}_\ell} = \sum_{1 \leq j \leq D} \sum_{\substack{x\in \mathcal{B}_\ell: \\ 0 \leq x_j \leq \ell-1}} c_{x_j} \left(\prod_{\substack{1\leq k \leq D: \\ k \neq j}} d_{x_k} \right)h_{x,e_j}.
\eeq
Note that these subsystem Hamiltonians have open boundary conditions.

Figure \ref{fig:WB3grid} depicts the distribution of edge weights that create $W_{\mathcal{B}_3}$ in two dimensions. Notice that conditions (i)-(iii) imply that $c_0=c_2\leq c_1$ and $d_0=d_3\leq d_1=d_2$, so our weighting scheme implements the general principle that having large edge weights near the center and comparatively smaller edge weights near the boundary leads to the best thresholds. (This principle can be heuristically understood from the proof of the finite-size criterion leads where the goal ist to efficiently cover the lattice through translates of  $W_{\mathcal{B}_3}$.) The particular scheme that we propose in \eqref{eq:WBgrid} turns out to implement this general principle in a very efficient manner that allows for convenient calculation especially in higher dimensions. 

\vspace{.5cm}
\begin{figure}[h]
\begin{center}
\resizebox{5cm}{!}{
\begin{tikzpicture}
\node at (1.0607,-1.0607) [circle,fill,inner sep=1.5pt]{};
\node at (1.0607,1.0607) [circle,fill,inner sep=1.5pt]{};
\node at (-1.0607,1.0607) [circle,fill,inner sep=1.5pt]{};
\node at (-1.0607,-1.0607) [circle,fill,inner sep=1.5pt]{};
\node at (1.0607,1.0607) [circle,fill,inner sep=1.5pt]{};
\node at (1.0607,3.1820) [circle,fill,inner sep=1.5pt]{};
\node at (-1.0607,3.1820) [circle,fill,inner sep=1.5pt]{};
\node at (-1.0607,1.0607) [circle,fill,inner sep=1.5pt]{};
\node at (1.0607,3.1820) [circle,fill,inner sep=1.5pt]{};
\node at (1.0607,5.3033) [circle,fill,inner sep=1.5pt]{};
\node at (-1.0607,5.3033) [circle,fill,inner sep=1.5pt]{};
\node at (-1.0607,3.1820) [circle,fill,inner sep=1.5pt]{};
\node at (3.1820,-1.0607) [circle,fill,inner sep=1.5pt]{};
\node at (3.1820,1.0607) [circle,fill,inner sep=1.5pt]{};
\node at (1.0607,1.0607) [circle,fill,inner sep=1.5pt]{};
\node at (1.0607,-1.0607) [circle,fill,inner sep=1.5pt]{};
\node at (3.1820,1.0607) [circle,fill,inner sep=1.5pt]{};
\node at (3.1820,3.1820) [circle,fill,inner sep=1.5pt]{};
\node at (1.0607,3.1820) [circle,fill,inner sep=1.5pt]{};
\node at (1.0607,1.0607) [circle,fill,inner sep=1.5pt]{};
\node at (3.1820,3.1820) [circle,fill,inner sep=1.5pt]{};
\node at (3.1820,5.3033) [circle,fill,inner sep=1.5pt]{};
\node at (1.0607,5.3033) [circle,fill,inner sep=1.5pt]{};
\node at (1.0607,3.1820) [circle,fill,inner sep=1.5pt]{};
\node at (5.3033,-1.0607) [circle,fill,inner sep=1.5pt]{};
\node at (5.3033,1.0607) [circle,fill,inner sep=1.5pt]{};
\node at (3.1820,1.0607) [circle,fill,inner sep=1.5pt]{};
\node at (3.1820,-1.0607) [circle,fill,inner sep=1.5pt]{};
\node at (5.3033,1.0607) [circle,fill,inner sep=1.5pt]{};
\node at (5.3033,3.1820) [circle,fill,inner sep=1.5pt]{};
\node at (3.1820,3.1820) [circle,fill,inner sep=1.5pt]{};
\node at (3.1820,1.0607) [circle,fill,inner sep=1.5pt]{};
\node at (5.3033,3.1820) [circle,fill,inner sep=1.5pt]{};
\node at (5.3033,5.3033) [circle,fill,inner sep=1.5pt]{};
\node at (3.1820,5.3033) [circle,fill,inner sep=1.5pt]{};
\node at (3.1820,3.1820) [circle,fill,inner sep=1.5pt]{};
\draw(-1.0607, 1.0607)--(-1.0607, -1.0607) node[pos=.5]{$d_0c_0$};
\draw(-1.0607, 3.1820)--(-1.0607, 1.0607) node[pos=.5]{$d_0c_1$};
\draw(-1.0607, 5.3033)--(-1.0607, 3.1820) node[pos=.5]{$d_0c_2$};
\draw(1.0607, 1.0607)--(1.0607, -1.0607) node[pos=.5]{$d_1c_0$};
\draw(1.0607, 3.1820)--(1.0607, 1.0607) node[pos=.5]{$d_1c_1$};
\draw(1.0607, 5.3033)--(1.0607, 3.1820) node[pos=.5]{$d_1c_2$};
\draw(3.1820, 1.0607)--(3.1820, -1.0607) node[pos=.5]{$d_2c_0$};
\draw(3.1820, 3.1820)--(3.1820, 1.0607) node[pos=.5]{$d_2c_1$};
\draw(3.1820, 5.3033)--(3.1820, 3.1820) node[pos=.5]{$d_2c_2$};
\draw(5.3033, 1.0607)--(5.3033, -1.0607) node[pos=.5]{$d_3c_0$};
\draw(5.3033, 3.1820)--(5.3033, 1.0607) node[pos=.5]{$d_3c_1$};
\draw(5.3033, 5.3033)--(5.3033, 3.1820) node[pos=.5]{$d_3c_2$};
\draw(-1.0607, -1.0607)--(1.0607, -1.0607) node[pos=.5]{$c_0d_0$};
\draw(-1.0607, 1.0607)--(1.0607, 1.0607) node[pos=.5]{$c_0d_1$};
\draw(-1.0607, 3.1820)--(1.0607, 3.1820) node[pos=.5]{$c_0d_2$};
\draw(-1.0607, 5.3033)--(1.0607, 5.3033) node[pos=.5]{$c_0d_3$};
\draw(1.0607, -1.0607)--(3.1820, -1.0607) node[pos=.5]{$c_1d_0$};
\draw(1.0607, 1.0607)--(3.1820, 1.0607) node[pos=.5]{$c_1d_1$};
\draw(1.0607, 3.1820)--(3.1820, 3.1820) node[pos=.5]{$c_1d_2$};
\draw(1.0607, 5.3033)--(3.1820, 5.3033) node[pos=.5]{$c_1d_3$};
\draw(3.1820, -1.0607)--(5.3033, -1.0607) node[pos=.5]{$c_2d_0$};
\draw(3.1820, 1.0607)--(5.3033, 1.0607) node[pos=.5]{$c_2d_1$};
\draw(3.1820, 3.1820)--(5.3033, 3.1820) node[pos=.5]{$c_2d_2$};
\draw(3.1820, 5.3033)--(5.3033, 5.3033) node[pos=.5]{$c_2d_3$};
\end{tikzpicture}
}
\caption{The weighted subsystem Hamiltonian $W_{\mathcal{B}_3}$ on the 2D square grid. Edges are labeled by their weights. Notice that edges in the center have a comparatively higher weight compared to edges along the boundary, which in turn have a relatively higher weight compared to edges at a corner.}
\label{fig:WB3grid}
\end{center}
\end{figure}
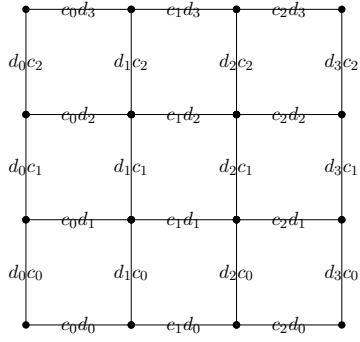
\vspace{-.5cm}

To understand the product nature of the scheme, observe that the two-dimensional $W_{\mathcal{B}_3}$ depicted in Figure \ref{fig:WB3grid} arises as the Hadamard (pointwise) product from two simpler weighted Hamiltonians displayed in Figure \ref{fig:WB3gridproduct}. 
This idea of using products as simpler building blocks for realizing \eqref{eq:WBgrid} extends well to higher dimensions. Then the independence-like properties of products are what underlies the calculational convenience of this weighting scheme in higher dimensions. To summarize, formula \eqref{eq:WBgrid} is the key theoretical underpinning of the improved thresholds.

\vspace{.5cm}
\begin{figure}
\begin{center}
\resizebox{8cm}{!}{
\begin{minipage}{.33\textwidth}
\begin{tikzpicture}
\node at (0.7071,-0.7071) [circle,fill,inner sep=1.5pt]{};
\node at (0.7071,0.7071) [circle,fill,inner sep=1.5pt]{};
\node at (-0.7071,0.7071) [circle,fill,inner sep=1.5pt]{};
\node at (-0.7071,-0.7071) [circle,fill,inner sep=1.5pt]{};
\node at (0.7071,0.7071) [circle,fill,inner sep=1.5pt]{};
\node at (0.7071,2.1213) [circle,fill,inner sep=1.5pt]{};
\node at (-0.7071,2.1213) [circle,fill,inner sep=1.5pt]{};
\node at (-0.7071,0.7071) [circle,fill,inner sep=1.5pt]{};
\node at (0.7071,2.1213) [circle,fill,inner sep=1.5pt]{};
\node at (0.7071,3.5355) [circle,fill,inner sep=1.5pt]{};
\node at (-0.7071,3.5355) [circle,fill,inner sep=1.5pt]{};
\node at (-0.7071,2.1213) [circle,fill,inner sep=1.5pt]{};
\node at (2.1213,-0.7071) [circle,fill,inner sep=1.5pt]{};
\node at (2.1213,0.7071) [circle,fill,inner sep=1.5pt]{};
\node at (0.7071,0.7071) [circle,fill,inner sep=1.5pt]{};
\node at (0.7071,-0.7071) [circle,fill,inner sep=1.5pt]{};
\node at (2.1213,0.7071) [circle,fill,inner sep=1.5pt]{};
\node at (2.1213,2.1213) [circle,fill,inner sep=1.5pt]{};
\node at (0.7071,2.1213) [circle,fill,inner sep=1.5pt]{};
\node at (0.7071,0.7071) [circle,fill,inner sep=1.5pt]{};
\node at (2.1213,2.1213) [circle,fill,inner sep=1.5pt]{};
\node at (2.1213,3.5355) [circle,fill,inner sep=1.5pt]{};
\node at (0.7071,3.5355) [circle,fill,inner sep=1.5pt]{};
\node at (0.7071,2.1213) [circle,fill,inner sep=1.5pt]{};
\node at (3.5355,-0.7071) [circle,fill,inner sep=1.5pt]{};
\node at (3.5355,0.7071) [circle,fill,inner sep=1.5pt]{};
\node at (2.1213,0.7071) [circle,fill,inner sep=1.5pt]{};
\node at (2.1213,-0.7071) [circle,fill,inner sep=1.5pt]{};
\node at (3.5355,0.7071) [circle,fill,inner sep=1.5pt]{};
\node at (3.5355,2.1213) [circle,fill,inner sep=1.5pt]{};
\node at (2.1213,2.1213) [circle,fill,inner sep=1.5pt]{};
\node at (2.1213,0.7071) [circle,fill,inner sep=1.5pt]{};
\node at (3.5355,2.1213) [circle,fill,inner sep=1.5pt]{};
\node at (3.5355,3.5355) [circle,fill,inner sep=1.5pt]{};
\node at (2.1213,3.5355) [circle,fill,inner sep=1.5pt]{};
\node at (2.1213,2.1213) [circle,fill,inner sep=1.5pt]{};
\draw(-0.7071, 0.7071)--(-0.7071, -0.7071) node[pos=.5]{$d_0$};
\draw(-0.7071, 2.1213)--(-0.7071, 0.7071) node[pos=.5]{$d_0$};
\draw(-0.7071, 3.5355)--(-0.7071, 2.1213) node[pos=.5]{$d_0$};
\draw(0.7071, 0.7071)--(0.7071, -0.7071) node[pos=.5]{$d_1$};
\draw(0.7071, 2.1213)--(0.7071, 0.7071) node[pos=.5]{$d_1$};
\draw(0.7071, 3.5355)--(0.7071, 2.1213) node[pos=.5]{$d_1$};
\draw(2.1213, 0.7071)--(2.1213, -0.7071) node[pos=.5]{$d_2$};
\draw(2.1213, 2.1213)--(2.1213, 0.7071) node[pos=.5]{$d_2$};
\draw(2.1213, 3.5355)--(2.1213, 2.1213) node[pos=.5]{$d_2$};
\draw(3.5355, 0.7071)--(3.5355, -0.7071) node[pos=.5]{$d_3$};
\draw(3.5355, 2.1213)--(3.5355, 0.7071) node[pos=.5]{$d_3$};
\draw(3.5355, 3.5355)--(3.5355, 2.1213) node[pos=.5]{$d_3$};
\draw(-0.7071, -0.7071)--(0.7071, -0.7071) node[pos=.5]{$c_0$};
\draw(-0.7071, 0.7071)--(0.7071, 0.7071) node[pos=.5]{$c_0$};
\draw(-0.7071, 2.1213)--(0.7071, 2.1213) node[pos=.5]{$c_0$};
\draw(-0.7071, 3.5355)--(0.7071, 3.5355) node[pos=.5]{$c_0$};
\draw(0.7071, -0.7071)--(2.1213, -0.7071) node[pos=.5]{$c_1$};
\draw(0.7071, 0.7071)--(2.1213, 0.7071) node[pos=.5]{$c_1$};
\draw(0.7071, 2.1213)--(2.1213, 2.1213) node[pos=.5]{$c_1$};
\draw(0.7071, 3.5355)--(2.1213, 3.5355) node[pos=.5]{$c_1$};
\draw(2.1213, -0.7071)--(3.5355, -0.7071) node[pos=.5]{$c_2$};
\draw(2.1213, 0.7071)--(3.5355, 0.7071) node[pos=.5]{$c_2$};
\draw(2.1213, 2.1213)--(3.5355, 2.1213) node[pos=.5]{$c_2$};
\draw(2.1213, 3.5355)--(3.5355, 3.5355) node[pos=.5]{$c_2$};
\end{tikzpicture}
\end{minipage}
$\times$\quad\quad
\begin{minipage}{.33\textwidth}
\begin{tikzpicture}
\node at (0.7071,-0.7071) [circle,fill,inner sep=1.5pt]{};
\node at (0.7071,0.7071) [circle,fill,inner sep=1.5pt]{};
\node at (-0.7071,0.7071) [circle,fill,inner sep=1.5pt]{};
\node at (-0.7071,-0.7071) [circle,fill,inner sep=1.5pt]{};
\node at (0.7071,0.7071) [circle,fill,inner sep=1.5pt]{};
\node at (0.7071,2.1213) [circle,fill,inner sep=1.5pt]{};
\node at (-0.7071,2.1213) [circle,fill,inner sep=1.5pt]{};
\node at (-0.7071,0.7071) [circle,fill,inner sep=1.5pt]{};
\node at (0.7071,2.1213) [circle,fill,inner sep=1.5pt]{};
\node at (0.7071,3.5355) [circle,fill,inner sep=1.5pt]{};
\node at (-0.7071,3.5355) [circle,fill,inner sep=1.5pt]{};
\node at (-0.7071,2.1213) [circle,fill,inner sep=1.5pt]{};
\node at (2.1213,-0.7071) [circle,fill,inner sep=1.5pt]{};
\node at (2.1213,0.7071) [circle,fill,inner sep=1.5pt]{};
\node at (0.7071,0.7071) [circle,fill,inner sep=1.5pt]{};
\node at (0.7071,-0.7071) [circle,fill,inner sep=1.5pt]{};
\node at (2.1213,0.7071) [circle,fill,inner sep=1.5pt]{};
\node at (2.1213,2.1213) [circle,fill,inner sep=1.5pt]{};
\node at (0.7071,2.1213) [circle,fill,inner sep=1.5pt]{};
\node at (0.7071,0.7071) [circle,fill,inner sep=1.5pt]{};
\node at (2.1213,2.1213) [circle,fill,inner sep=1.5pt]{};
\node at (2.1213,3.5355) [circle,fill,inner sep=1.5pt]{};
\node at (0.7071,3.5355) [circle,fill,inner sep=1.5pt]{};
\node at (0.7071,2.1213) [circle,fill,inner sep=1.5pt]{};
\node at (3.5355,-0.7071) [circle,fill,inner sep=1.5pt]{};
\node at (3.5355,0.7071) [circle,fill,inner sep=1.5pt]{};
\node at (2.1213,0.7071) [circle,fill,inner sep=1.5pt]{};
\node at (2.1213,-0.7071) [circle,fill,inner sep=1.5pt]{};
\node at (3.5355,0.7071) [circle,fill,inner sep=1.5pt]{};
\node at (3.5355,2.1213) [circle,fill,inner sep=1.5pt]{};
\node at (2.1213,2.1213) [circle,fill,inner sep=1.5pt]{};
\node at (2.1213,0.7071) [circle,fill,inner sep=1.5pt]{};
\node at (3.5355,2.1213) [circle,fill,inner sep=1.5pt]{};
\node at (3.5355,3.5355) [circle,fill,inner sep=1.5pt]{};
\node at (2.1213,3.5355) [circle,fill,inner sep=1.5pt]{};
\node at (2.1213,2.1213) [circle,fill,inner sep=1.5pt]{};
\draw(-0.7071, 0.7071)--(-0.7071, -0.7071) node[pos=.5]{$c_0$};
\draw(-0.7071, 2.1213)--(-0.7071, 0.7071) node[pos=.5]{$c_1$};
\draw(-0.7071, 3.5355)--(-0.7071, 2.1213) node[pos=.5]{$c_2$};
\draw(0.7071, 0.7071)--(0.7071, -0.7071) node[pos=.5]{$c_0$};
\draw(0.7071, 2.1213)--(0.7071, 0.7071) node[pos=.5]{$c_1$};
\draw(0.7071, 3.5355)--(0.7071, 2.1213) node[pos=.5]{$c_2$};
\draw(2.1213, 0.7071)--(2.1213, -0.7071) node[pos=.5]{$c_0$};
\draw(2.1213, 2.1213)--(2.1213, 0.7071) node[pos=.5]{$c_1$};
\draw(2.1213, 3.5355)--(2.1213, 2.1213) node[pos=.5]{$c_2$};
\draw(3.5355, 0.7071)--(3.5355, -0.7071) node[pos=.5]{$c_0$};
\draw(3.5355, 2.1213)--(3.5355, 0.7071) node[pos=.5]{$c_1$};
\draw(3.5355, 3.5355)--(3.5355, 2.1213) node[pos=.5]{$c_2$};
\draw(-0.7071, -0.7071)--(0.7071, -0.7071) node[pos=.5]{$d_0$};
\draw(-0.7071, 0.7071)--(0.7071, 0.7071) node[pos=.5]{$d_1$};
\draw(-0.7071, 2.1213)--(0.7071, 2.1213) node[pos=.5]{$d_2$};
\draw(-0.7071, 3.5355)--(0.7071, 3.5355) node[pos=.5]{$d_3$};
\draw(0.7071, -0.7071)--(2.1213, -0.7071) node[pos=.5]{$d_0$};
\draw(0.7071, 0.7071)--(2.1213, 0.7071) node[pos=.5]{$d_1$};
\draw(0.7071, 2.1213)--(2.1213, 2.1213) node[pos=.5]{$d_2$};
\draw(0.7071, 3.5355)--(2.1213, 3.5355) node[pos=.5]{$d_3$};
\draw(2.1213, -0.7071)--(3.5355, -0.7071) node[pos=.5]{$d_0$};
\draw(2.1213, 0.7071)--(3.5355, 0.7071) node[pos=.5]{$d_1$};
\draw(2.1213, 2.1213)--(3.5355, 2.1213) node[pos=.5]{$d_2$};
\draw(2.1213, 3.5355)--(3.5355, 3.5355) node[pos=.5]{$d_3$};
\end{tikzpicture}    
    \end{minipage}
}
\caption{The weighted subsystem Hamiltonian $W_{\mathcal{B}_3}$ displayed in Figure \ref{fig:WB3grid} arises as the pointwise product of these two simpler weighting schemes.}
\label{fig:WB3gridproduct}
\end{center}
\end{figure}
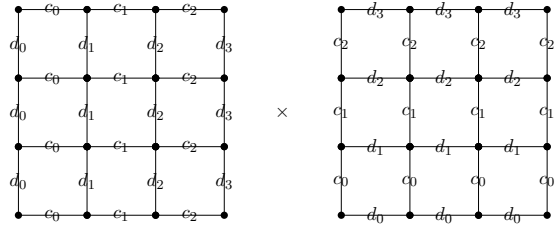
\vspace{-.5cm}

These weighted Hamiltonians will be used to prove the following general gap bound.

\be{prop}[General gap bound on the Euclidean lattice]\label{prop:keygrid} For $\ell\geq 2$, let the coefficients $c_0, \dotsc c_{\ell-1}$ and $d_0, \dotsc d_{\ell}$ satisfy requirements (i)-(iii) above. Under the assumptions of Theorem \ref{thm:main}, we have
\beq\label{eq:propclaim}
  \gamma_L^{\mathrm{per}} \geq \frac{K_4}{K_3}\left(\gamma_\ell - \frac{K_0+K_3-2K_1  }{K_4} \right),
  \eeq
  provided that $K_3\geq \max\{K_1,K_2\}$, where we introduced the effective constants
  \beq\label{eq:Kdefns}
\begin{aligned}
K_0 &= \left( \sum_{i=0}^{\ell -1} c_i^2\right) \left( \sum_{i=0}^\ell  d_i^2\right)^{D-1}\\
    K_1 &=\left(\sum_{i=0}^{\ell -1}c_i^2\right)\left(\sum_{i=0}^{\ell -1}d_id_{i+1} \right)\left(\sum_{i=0}^\ell  d_i^2 \right)^{D-2} \\
K_2 &= \left(\sum_{i=0}^{\ell -2}c_ic_{i+1}\right)\left(\sum_{i=0}^\ell  d_i^2 \right)^{D-1} \\
K_3 &= \left( \sum_{i=0}^{\ell -1}c_id_i\right)^2 \left(\sum_{i=0}^{\ell } d_i^2 \right)^{D-2} \\
K_4&= \frac{1}{\ell (\ell +1)^{D-1}} \left(\sum_{i=0}^{\ell -1}c_i \right)^2\left( \sum_{i=0}^\ell  d_i\right)^{2(D-1)}. 
\end{aligned}
\eeq
\e{prop}

In the next few subsections, we prove Proposition \ref{prop:keygrid}. Afterwards, in Subsections \ref{ssect:2Doptimize} and \ref{ssect:DDoptimize}, we discuss how to choose the parameters $c_i$ and $d_i$ to optimize the local gap threshold $\frac{K_0+K_3-2K_1}{K_4}$ which will complete the proof of Theorem \ref{thm:main}.

\subsection{Expansion of the auxiliary operator}\label{ssect:expansionaux}
For every site $t\in \Lambda_L$, we can define the translated box 
$$
B_t  = \{x \in \Lambda_L : x-t \in \mathcal{B}_\ell\}
$$
Recall the we use periodic boundary conditions on $\Lam_L$ but open boundary conditions on $B_\ell$.
We denote the corresponding unweighted and weighted Hamiltonians (the latter defined analogously to \eqref{eq:WBgrid}) by $H_{B_t}$ and $W_{B_t}$, respectively, which can be obtained by conjugating $H_{\mathcal{B}_\ell},W_{\mathcal{B}_\ell}$ with translation. Since translation is unitary, the spectral gap is independent of $t$. We consider the auxiliary operator
$
A=
\sum_{t \in \Lambda_L} W_{B_t}^2.
$
This operator can be written as a weighted sum of anti-commutators $\{h_{x,e_j}, h_{y,e_k}\}= h_{x,e_j}h_{y,{e_k}}+h_{y,e_k}h_{x,e_j}$, i.e. 
\begin{equation}\label{eq:wsum}
    A=\sum_{t\in \Lambda_L} W_{B_t}^2 =\sum_{(x, j) \in \Lambda_L \times [D]} w_{x,j} h_{x,e_j}^2 + \sum_{\substack{(x, j), (y, k) \in \Lambda_L \times [D]: \\(x,e_k) \neq (y,e_k)}} w_{x,y,j,k} \{h_{x,e_j}, h_{y,e_k}\}.
\end{equation}
where we denoted $[D]=\{1,\ldots,D\}$ for appropriate effective weights $w_{x,j}$ and $w_{x,y,j,k} $ to be investigated in the following. Computing these coefficient will lead us to the explicit formula \eqref{eq:Arewrite} below.

First, we have
$$
w_{x,j}=\left( \sum_{i=0}^{\ell-1} c_i^2\right) \left( \sum_{i=0}^\ell d_i^2\right)^{D-1}=K_0
$$
Combining this with the fact that $h_{x,e_j}^2 = h_{x,e_j}$ (since the $h_{x,e_j}$ are projectors) we find 
\begin{align*}
\sum_{(x, e_j) \in \Lambda_L \times [D]} w_{x,j} h_{x,e_j}^2 
%= \left( \sum_{i=0}^{\ell-1} c_i^2\right) \left( \sum_{i=0}^n d_i^2\right)^{D-1} \sum_{(x, e_j) \in \Lambda_L \times [D]} h_{x,e_j}^2 
= K_0 H_{\Lam_L},
\end{align*}
which is the first term in \eqref{eq:Arewrite}.

Next, we consider the off-diagonal weights $w_{x,y,j,k}$ in \eqref{eq:wsum}.  We write $x = (x_1, \dotsc, x_d), y = (y_1, \dotsc, y_d)$ and $z_j = |x_j-y_j|$. 

\textit{Case $j=k$.} This means that the two relevant edges are collinear. Then, 
\beq\label{eq:generalcollinear}
w_{x,y,j,j} = \left(\sum_{i=0}^{\ell-1-z_j} c_i c_{i+z_j} \right) \prod_{\substack{2\leq m\leq D:\\ m\neq j}} \left(\sum_{j=0}^{\ell-z_m} d_j d_{j+z_m} \right)
\eeq
Here, we use the assumption that $L>2\ell$, as otherwise one might have terms such as $d_id_{i+\ell-z_m}$ in the sums. We claim the expressions $\sum_{i=0}^{\ell-1-z_j} c_i c_{i+z_j} $ and $\sum_{j=0}^{\ell-z_m} d_j d_{j+z_m}$ are monotonically decreasing in the distances $z_1,\ldots,z_D$ under our assumptions on $c_i,d_i$. To see this, suppose we choose $a_i, b_i, 0 \leq i \leq n$, such that the $a_i = 1_{i \in [j,\ell-j]}, b_i = 1_{i \in [k, n-k]}$. Then $\sum_{i=0}^{\ell-z_m} a_ib_{i+z_m}$ is decreasing in $z_k$, and then linearity shows that $\sum_{i=0}^{\ell-z_k} d_i d_{i+z_k}$ is decreasing in $z_k$. An identical argument holds for the $c_i$. 

Since $x\neq y$, we have $\sum_{1\leq m\leq D} z_m\geq 1$ which is minimal for $\sum_{1\leq m\leq D} z_m\geq 1=1$. In other words,  the maximum value of $w_{x,y,j,j}$ when $x$ and $y$ are nearest neighbors. There are two possibilities for this: (i) both edges are incident to the same vertex, i.e. the edge pair is of the shape
\begin{tikzpicture}
\draw[thick] (0,.15) -- (.3,.15);
\draw[thick] (.3,.15) -- (.6,.15);
\tkzDefPoint(0,.15){A}
\tkzDefPoint(.3,.15){B}
\tkzDefPoint(.6,.15){C}
\foreach \n in {A,B,C}
  \node at (\n)[circle,fill,inner sep=1pt]{};
\end{tikzpicture}
, which translates to $z_j=1$ and all other $z_m=0$. (ii) Both edges are parallel but do not overlap at a vertex, i.e., $y = x+e_{m_0}$ for some $m_0 \neq j$. We  represent this case by the edge pair
\resizebox{.6cm}{!}{
\begin{tikzpicture}
\draw[thick] (0,0) -- (.3,0);
\draw[thick] (0,.3) -- (.3,.3);
\tkzDefPoint(0,0){A}
\tkzDefPoint(.3,0){B}
\tkzDefPoint(0,.3){C}
\tkzDefPoint(.3,.3){D}
\foreach \n in {A,B,C,D}
  \node at (\n)[circle,fill,inner sep=1pt]{};
\end{tikzpicture}
}. 
This case translates to $z_{m_0}=1$ for some $m_0\neq j$ and all other $z_m=0$. Distinguishing these cases in \eqref{eq:generalcollinear}, we conclude that
$$
w_{x,y,j,j}\leq \max\{K_1,K_2\},
$$
for the effective constants
$$
K_1= \left(\sum_{i=0}^{\ell-2}c_ic_{i+1}\right)\left(\sum_{i=0}^\ell d_i^2 \right)^{D-1},\quad 
K_2=\left(\sum_{i=0}^{\ell-1}c_i^2\right)\left(\sum_{i=0}^{\ell-1}d_id_{i+1} \right)\left(\sum_{i=0}^\ell d_i^2 \right)^{D-2}.
$$

\textit{Case $j \neq k$.} Without loss of generality, we assume $j=1, k=2$, in which case 
$$
w_{x,y,1,2} =  \left(\sum_{i=0}^{\ell-z_1-1} c_id_{i+z_1} \right)\left(\sum_{i=0}^{\ell-z_2-1} c_id_{i+z_2} \right)\prod_{k=3}^D \left(\sum_{i=0}^{\ell-z_k} d_i d_{i+z_k} \right).
$$

Via a similar argument as before, the above quantity is decreasing in $z_1, z_2$, and as a result the maximum value of $w_{x,y,1,2}$ is achieved by adjacent edges of the shape 
\resizebox{.6cm}{!}{
\begin{tikzpicture}
\draw[thick] (0,.3) -- (.3,.3);
\draw[thick] (.3,.3) -- (.3,0);
\tkzDefPoint(0,.3){A}
\tkzDefPoint(.3,.3){B}
\tkzDefPoint(.3,0){C}
\foreach \n in {A,B,C}
  \node at (\n)[circle,fill,inner sep=1pt]{};
  \end{tikzpicture}
  }.
  This yields the bound
\begin{equation}\label{eqn:k3}
w_{x,y,1,2}\leq K_3= \left( \sum_{i=0}^{\ell-1}c_id_i\right)^2 \left(\sum_{i=0}^{\ell} d_i^2 \right)^{D-2},\qquad j\neq k.
\end{equation}
The fact that the upper bound is $K_3$ for all choices of orientation of the edge shape 
\resizebox{.6cm}{!}{
\begin{tikzpicture}
\draw[thick] (0,.3) -- (.3,.3);
\draw[thick] (.3,.3) -- (.3,0);
\tkzDefPoint(0,.3){A}
\tkzDefPoint(.3,.3){B}
\tkzDefPoint(.3,0){C}
\foreach \n in {A,B,C}
  \node at (\n)[circle,fill,inner sep=1pt]{};
  \end{tikzpicture}
  }
  follows from the symmetry of the $c_i, d_i$ and relabelling the summation indices appropriately.

Thus, we can expand the second term in \eqref{eq:wsum} to obtain
\beq\label{eq:Arewrite}
A  =K_0H_{\Lam_L}+K_1 \sum_{\begin{tikzpicture}
\draw[thick] (0,.15) -- (.3,.15);
\draw[thick] (.3,.15) -- (.6,.15);
\tkzDefPoint(0,.15){A}
\tkzDefPoint(.3,.15){B}
\tkzDefPoint(.6,.15){C}
\foreach \n in {A,B,C}
  \node at (\n)[circle,fill,inner sep=1pt]{};
\end{tikzpicture}}  \{h_{x,e_j}, h_{y,e_k}\} 
+ K_2\sum_{\begin{tikzpicture}
\draw[thick] (-.3,-.1) -- (0,-.1);
\draw[thick] (-.3,-.4) -- (0,-.4);
\tkzDefPoint(0,-.1){A}
\tkzDefPoint(-.3,-.1){B}
\tkzDefPoint(0,-.4){C}
\tkzDefPoint(-.3,-.4){D}
\foreach \n in {A,B,C,D}
  \node at (\n)[circle,fill,inner sep=1pt]{};
\end{tikzpicture}}   \{h_{x,e_j}, h_{y,e_k}\}  +K_3
\sum_{\resizebox{.4cm}{!}{\begin{tikzpicture}
\draw[thick] (-.3,-.1) -- (0,-.1);
\draw[thick] (0,-.1) -- (0,-.4);
\tkzDefPoint(0,-.1){A}
\tkzDefPoint(-.3,-.1){B}
\tkzDefPoint(0,-.4){C}
\foreach \n in {A,B,C}
  \node at (\n)[circle,fill,inner sep=1pt]{};
\end{tikzpicture}}}   \{h_{x,e_j}, h_{y,e_k}\} + R_A,
\eeq
where the sums are over all edge pairs of the indicated type, and $R_A$ consists of the anti-commutators corresponding to all remaining edge pairs. The monotonicity established above and our assumption that $K_3 \geq \max\{K_1,K_2\}$ imply that the positive coefficient of any particular commutator $\{h_{x,e_j}, h_{y,e_k}\} $ in $R_G$ is bounded above by $K_3$. 

\subsection{Comparison to the squared Hamiltonian}
We may now similarly expand $H_{\Lam_L}^2$, writing 
\begin{equation}\label{eqn:hsum}
H_{\Lam_L}^2 = H_{\Lam_L} + \sum_{\begin{tikzpicture}
\draw[thick] (0,.15) -- (.3,.15);
\draw[thick] (.3,.15) -- (.6,.15);
\tkzDefPoint(0,.15){A}
\tkzDefPoint(.3,.15){B}
\tkzDefPoint(.6,.15){C}
\foreach \n in {A,B,C}
  \node at (\n)[circle,fill,inner sep=1pt]{};
\end{tikzpicture}} \{h_{x,e_j}, h_{y,e_k}\} + \sum_{\begin{tikzpicture}
\draw[thick] (-.3,-.1) -- (0,-.1);
\draw[thick] (-.3,-.4) -- (0,-.4);
\tkzDefPoint(0,-.1){A}
\tkzDefPoint(-.3,-.1){B}
\tkzDefPoint(0,-.4){C}
\tkzDefPoint(-.3,-.4){D}
\foreach \n in {A,B,C,D}
  \node at (\n)[circle,fill,inner sep=1pt]{};
\end{tikzpicture}} \{h_{x,e_j}, h_{y,e_k}\}+ \sum_{\begin{tikzpicture}
\draw[thick] (-.3,-.1) -- (0,-.1);
\draw[thick] (0,-.1) -- (0,-.4);
\tkzDefPoint(0,-.1){A}
\tkzDefPoint(-.3,-.1){B}
\tkzDefPoint(0,-.4){C}
\foreach \n in {A,B,C}
  \node at (\n)[circle,fill,inner sep=1pt]{};
\end{tikzpicture}} \{h_{x,e_j}, h_{y,e_k}\} + R_{\Lam_L}
\end{equation}
where again $R_{\Lam_L}$ contains the anti-commutators corresponding to all remaining edge pairs. Since the edge pairs entering $R_{\Lam_L}$ do not share a vertex, the corresponding projections commute and satisfy the positivity $\{h_{x,e_j}, h_{y,e_k}\}\geq 0$.  Considering that the coefficient of any particular commutator in $R_A$ is bounded above by $K_3$, we conclude that the operator $K_3 R_{\Lam_L} - R_A$
is a positive linear combination of positive operators. Analogous reasoning yields 
$$
(K_3-K_2)\sum_{\begin{tikzpicture}
\draw[thick] (-.3,-.1) -- (0,-.1);
\draw[thick] (-.3,-.4) -- (0,-.4);
\tkzDefPoint(0,-.1){A}
\tkzDefPoint(-.3,-.1){B}
\tkzDefPoint(0,-.4){C}
\tkzDefPoint(-.3,-.4){D}
\foreach \n in {A,B,C,D}
  \node at (\n)[circle,fill,inner sep=1pt]{};
\end{tikzpicture}} \{h_{x,e_j}, h_{y,e_k}\}\geq 0.$$

Combining these estimates with our expansions \eqref{eq:Arewrite} and \eqref{eqn:hsum}, we obtain the operator inequality 
\begin{align*}
K_3H^2_{\Lam_L} - \sum_{l \in \Lambda_L} W_{B_l}^2 &\geq (K_3-K_0) H_{\Lam_L}+(K_3-K_1) \sum_{
\begin{tikzpicture}
\draw[thick] (0,.15) -- (.3,.15);
\draw[thick] (.3,.15) -- (.6,.15);
\tkzDefPoint(0,.15){A}
\tkzDefPoint(.3,.15){B}
\tkzDefPoint(.6,.15){C}
\foreach \n in {A,B,C}
  \node at (\n)[circle,fill,inner sep=1pt]{};
\end{tikzpicture}} \{h_{x,e_j}, h_{y,e_k}\}.
\end{align*}

%\sum_{\substack{\text{edge pairs}\\ (x, e_j), (y, e_k)}} \{h_{x,e_j}, h_{y,e_k}\}
To treat the last term, we draw on an idea of \cite{L2} and apply the operator Cauchy-Schwarz inequality in the form 
$$
-\{h_{x,e_j}, h_{y,e_k}\} \leq (-h_{x,e_j})^2 + h_{y,e_k}^2 = h_{x,e_j}  + h_{y,e_k}
$$
which implies
$$
\sum_{\begin{tikzpicture}
\draw[thick] (0,.15) -- (.3,.15);
\draw[thick] (.3,.15) -- (.6,.15);
\tkzDefPoint(0,.15){A}
\tkzDefPoint(.3,.15){B}
\tkzDefPoint(.6,.15){C}
\foreach \n in {A,B,C}
  \node at (\n)[circle,fill,inner sep=1pt]{};
\end{tikzpicture}} \{h_{x,e_j}, h_{y,e_k}\} \geq - 2H_{\Lam_L},
$$
because each edge is contained in precisely two bond pairs of the shape \begin{tikzpicture}
\draw[thick] (0,.15) -- (.3,.15);
\draw[thick] (.3,.15) -- (.6,.15);
\tkzDefPoint(0,.15){A}
\tkzDefPoint(.3,.15){B}
\tkzDefPoint(.6,.15){C}
\foreach \n in {A,B,C}
  \node at (\n)[circle,fill,inner sep=1pt]{};
\end{tikzpicture}.
We have shown that
\begin{equation} \label{eqn:fundamentalsum}
K_3H^2_{\Lam_L} - \sum_{t \in \Lambda_L} W_{B_t}^2 \geq (2K_1-K_0-K_3) H_{\Lam_L}
\end{equation}
which we point out already bears some similarity with \eqref{eq:propclaim}.

\subsection{Translation-invariant excited states}
Finally, we compare $\sum_{t \in \Lambda_L} W^2_{B_t}$ with $H_{\Lam_L}$. The naive estimate involving the spectral gap of the unweighted operator times the smallest coefficient $\min\{c_0,d_0\}$ is insufficient. Instead, we use the following refinement which leverages translation-invariance and is directly inspired by \cite[Lemma 4]{GM}.

\begin{lm}\label{lm:translationgrid}

There exists a normalized state $|\phi\rangle$ satisfying $H_{\Lam_L}|\phi\rangle=\gamma_{\Lam_L}^{\mathrm{per}}|\phi\rangle$ such that
%$$
%W_{\mathcal{B}_\ell} = \sum_{1 \leq j \leq D} \sum_{\substack{x\in \mathcal{B}_\ell: \\ 0 \leq x_j \leq \ell-1}} w_{x,e_j}h_{x,e_j}
%$$
\begin{equation} \label{eqn:averaging}
\langle \phi , W^2_{B_t}\phi \rangle  \geq \gamma_\ell \left(\min_{1\leq j\leq D}  \frac{1}{\ell(\ell+1)^{D-1}}\sum_{\substack{x\in B_t \\ 0 \leq x_j \leq \ell-1}} w_{x,e_j} \right)\langle \phi , W_{B_t}\phi \rangle,\qquad t\in\Lam_L.
\end{equation}
\end{lm}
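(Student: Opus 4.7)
The plan is to follow the strategy of \cite[Lemma 4]{GM}, adapted to the higher-dimensional weighted Hamiltonian \eqref{eq:WBgrid}. First, I would construct $|\phi\rangle$. Since $H_{\Lambda_L}$ commutes with the abelian translation group $\{T_a : a \in \Lambda_L\}$ on the torus, the gap eigenspace $\{\psi : H_{\Lambda_L}\psi = \gamma_L^{\mathrm{per}}\psi\}$ is translation-invariant and hence decomposes into simultaneous translation eigenspaces. I would pick $|\phi\rangle$ to be a normalized joint eigenvector of $H_{\Lambda_L}$ (with eigenvalue $\gamma_L^{\mathrm{per}}$) and of every $T_a$.

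The next step is to exploit the consequences of this joint eigenvector property. Expectation values $\langle \phi, O\phi\rangle$ then depend only on the translation-orbit of $O$. In particular, $\langle \phi, h_{x,e_j}\phi\rangle = \alpha_j$ is independent of $x$, and $\langle \phi, h_{x,e_j} h_{y,e_k}\phi\rangle$ depends only on $(j, k, y-x)$. This immediately yields $\langle \phi, W_{B_t}\phi\rangle = \sum_j N_j \bar w_j \alpha_j$ with $N_j = \ell(\ell+1)^{D-1}$ the number of $j$-directed edges in $B_t$ and $\bar w_j = \frac{1}{N_j}\sum_x w_{x,e_j}$ the directional average weight. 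Both $\langle \phi, W_{B_t}\phi\rangle$ and $\langle \phi, W^2_{B_t}\phi\rangle$ will also be independent of $t$, so it suffices to prove the inequality for a single convenient choice of $t$.

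The heart of the argument combines two ingredients: the frustration-free subsystem gap $H_{B_t}^2 \geq \gamma_\ell H_{B_t}$ (an operator inequality, since $H_{B_t}$ is positive semidefinite with gap $\gamma_\ell$ and a frustration-free ground space), and the directional averaging afforded by translation invariance. I plan to expand $W_{B_t}^2 = \sum_{e,e'} w_e w_{e'} h_e h_{e'}$ into correlation functions and use the positive-kernel identity $\ker W_{B_t} = \ker H_{B_t}$ to convert the gap inequality into a lower bound on $\langle\phi, W^2_{B_t}\phi\rangle$. Invoking translation invariance to collapse sums of weights against $\alpha$-values into directional averages, and then regrouping the resulting terms by edge direction, should produce the factor $\min_j \bar w_j$ rather than the naive $\min_e w_e$ in the final bound. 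The hard part will be controlling the off-diagonal cross terms $w_e w_{e'} \langle \phi, h_e h_{e'}\phi\rangle$ in the expansion of $W^2_{B_t}$ without losing the improvement from minimum weight to directional average; this step should leverage the monotonicity and symmetry conditions (ii)--(iii) on the coefficients $c_i, d_i$, together with the tensor-product structure of the weights in \eqref{eq:WBgrid}, which cleanly factorize the weighted sums along each coordinate direction.
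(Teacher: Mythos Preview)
Your construction of $|\phi\rangle$ via simultaneous diagonalization and the resulting one-point identity $\langle\phi, h_{x,e_j}\phi\rangle = \alpha_j$ are exactly right, and match the paper. However, the ``heart of the argument'' you describe takes a wrong turn.

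You propose to expand $\langle\phi, W_{B_t}^2\phi\rangle$ into two-point correlators $w_e w_{e'}\langle\phi, h_e h_{e'}\phi\rangle$ and control these cross terms using the monotonicity and symmetry conditions (ii)--(iii). This is a dead end: translation invariance tells you nothing useful about the individual two-point functions beyond their dependence on $(j,k,y-x)$, and there is no mechanism by which the coefficient conditions would help here. In fact the lemma holds, and the paper's proof goes through, for \emph{arbitrary} positive weights $w_{x,e_j}$; conditions (ii)--(iii) play no role in this lemma.

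The missing idea is to avoid two-point functions altogether. Let $P$ be the projector onto $\ker W_{B_t} = \ker H_{B_t}$ and set $|\hat\phi\rangle = P^\perp|\phi\rangle / \|P^\perp|\phi\rangle\|$ (the claim is trivial if $P^\perp|\phi\rangle = 0$). Frustration-freeness gives $h_{x,e_j} P = 0$ for every edge inside $B_t$, so $\langle\hat\phi, h_{x,e_j}\hat\phi\rangle = \langle\phi, h_{x,e_j}\phi\rangle / \langle\phi, P^\perp\phi\rangle$ and the one-point identity transfers to $\hat\phi$. Now Jensen's inequality in the state $\hat\phi$ yields
\[
\langle\phi, W_{B_t}^2\phi\rangle = \langle\phi, P^\perp\phi\rangle\,\langle\hat\phi, W_{B_t}^2\hat\phi\rangle \;\geq\; \langle\phi, P^\perp\phi\rangle\,\langle\hat\phi, W_{B_t}\hat\phi\rangle^2 \;=\; \langle\hat\phi, W_{B_t}\hat\phi\rangle\,\langle\phi, W_{B_t}\phi\rangle.
\]
The remaining factor $\langle\hat\phi, W_{B_t}\hat\phi\rangle$ is \emph{linear} in the $h_{x,e_j}$, so the directional averaging you correctly anticipate applies directly: replace each weight by the $j$-average $\bar w_j$, pull out $\min_j \bar w_j$, and then use $\langle\hat\phi, H_{B_t}\hat\phi\rangle \geq \gamma_\ell$ since $\hat\phi \perp \ker H_{B_t}$. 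No cross terms ever enter the argument.
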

\begin{proof}[Proof of Lemma \ref{lm:translationgrid}]
Let $T_i$ be spatial translation in the direction $i$. Then, since the $\{T_i\}_{1\leq i\leq D}$ and $H_{\Lam_L}$ all mutually commute, we can simultaneously diagonalize these operators. Thus, we can find a state $|\phi\rangle$ lying in the $\gamma_L^{\mathrm{per}}$-eigenspace of $H_{\Lam_L}$ such that $|\phi \rangle$ is also an eigenvector of all the $T_i$'s. For any two bonds $(x,e_j), (y,e_j)$, by conjugating with the translation operators we find that the energy 
\beq\label{eq:energyid}
\langle \phi | h_{x,e_j}   |\phi \rangle  = \langle \phi | h_{y,e_j}   |\phi \rangle
\eeq
is constant across all bonds. 

Next, let $P$ be the projector onto the ground state of $W_{B_t}$, and write $P^\perp$ for $I-P$. If $P^\perp |\phi \rangle =0$ then both sides in \eqref{eqn:averaging} vanish by frustration-freeness and the claim is trivial. Hence, we may assume $P^\perp |\phi \rangle \neq 0$ and write $|\hat{\phi}\rangle$ for the corresponding normalized state which satisfies $\langle \hat{\phi} , X \hat{\phi} \rangle = \langle \phi , X \phi \rangle\langle \phi , P^\perp  \phi \rangle$ for any bounded operator $X$.

Consider the case $t=0$, i.e., $B_t=\mathcal B_\ell$. From Jensen's inequality
\begin{align*}
    \langle \phi , W_{\mathcal{B}_\ell}^2 \phi \rangle &= \langle \hat{\phi}, W_{\mathcal{B}_\ell}^2 \hat{\phi} \rangle \langle \phi , P^\perp \phi \rangle 
    \geq (\langle \hat{\phi} , W_{\mathcal{B}_\ell} \hat{\phi}\rangle )^2 \langle \phi , P^\perp  \phi \rangle 
    = \langle  \hat{\phi} , W_{\mathcal{B}_\ell}  \hat{\phi} \rangle\langle \phi,W_{\mathcal{B}_\ell}\phi \rangle.
\end{align*}

The energy identity \eqref{eq:energyid} extends to $|\hat{\phi}\rangle$, i.e., $
\langle \hat{\phi} | h_{x,e_j}   |\hat{\phi} \rangle  = \langle \hat{\phi} | h_{y,e_j}   |\hat{\phi}\rangle$
which lets us average over the weights in $W_{\mathcal{B}_\ell}$, i.e., 
\begin{align*}
 \langle  \hat{\phi} , W_{\mathcal{B}_\ell}  \hat{\phi} \rangle &=  \l\langle  \hat{\phi} , \sum_{1 \leq j \leq D}\sum_{\substack{x\in \mathcal{B}_\ell: \\ 0 \leq x_j \leq \ell-1}}  w_{x,e_j} h_{x,e_j} \hat{\phi} \r\rangle \\
 &= \l\langle \hat{\phi} , \sum_{1 \leq j \leq D}\sum_{\substack{x\in \mathcal{B}_\ell: \\ 0 \leq x_j \leq \ell-1}}  \left(\frac{1}{\ell(\ell+1)^{D-1}}\sum_{\substack{y\in \mathcal{B}_\ell: \\ 0 \leq y_j \leq \ell-1}} w_{y,e_j}\right) h_{x,e_j}  \hat{\phi} \r\rangle \\
 &\geq  \left( \min_{1\leq j\leq D} \frac{1}{\ell(\ell+1)^{D-1}} \sum_{\substack{y\in \mathcal{B}_\ell: \\ 0 \leq y_j \leq \ell-1}}  w_{y,e_j} \right) \langle \hat{\phi}| H_{\mathcal{B}_\ell} |\hat{\phi} \rangle \\
 &\geq \gamma_\ell \left( \min_{1\leq j\leq D}\frac{1}{\ell(\ell+1)^{D-1}} \sum_{\substack{y\in \mathcal{B}_\ell: \\ 0 \leq y_j \leq \ell-1}}  w_{y,e_j} \right) 
\end{align*}
where in the last step we used the fact that $\hat{\phi}$ has been chosen so that it is orthogonal to the ground state of $W_{\mathcal{B}_\ell}$ and thus also of $H_{\mathcal{B}_\ell}$. (The weights do not change the zero eigenspace since they are positive.) The argument for $t\neq 0$ is analogous. This proves Lemma \ref{lm:translationgrid}.
\end{proof}

\subsection{Proof of Proposition \ref{prop:keygrid}}
We take the expectation value of \eqref{eqn:fundamentalsum} in the normalized state $|\phi\rangle$ whose existence is guaranteed by Lemma \ref{lm:translationgrid}. Using $H_{\Lam_L}|\phi\rangle=\gamma_{\Lam_L}^{\mathrm{per}}|\phi\rangle$, we obtain
\beq\label{eq:appliedlm}
K_3(\gamma_{\Lam_L}^{\mathrm{per}})^2 -\l\langle\phi, \sum_{t \in \Lambda_L} W_{B_t}^2\phi\r\rangle
 \geq (2K_1-K_0-K_3) \gamma_{\Lam_L}^{\mathrm{per}}.
\eeq

For our choice of weights, the average $\frac{1}{\ell(\ell+1)^{D-1}}\sum_y w_{y,e_j}$ in \eqref{eqn:averaging} is independent of direction $j$. Expressing this average in terms of the coefficients $c_i,d_i$ gives 
\begin{equation} \label{eqn:avgweights}
 \langle  \phi | W_{\mathcal{B}_\ell}^2 |  \phi \rangle  \geq \frac{\gamma_\ell}{\ell(\ell+1)^{D-1}} \left(\sum_{i=0}^{\ell-1}c_i \right)\left( \sum_{i=0}^\ell d_i\right)^{D-1} \langle   \phi | W_{\mathcal{B}_\ell} |  \phi \rangle
\end{equation}
The same inequality holds for all translates $W^2_{B_t}$ thanks to translation-invariance. Summing \eqref{eqn:avgweights} over all translates $W^2_{B_t}$, we obtain
\begin{align*}
     \sum_{t \in \Lambda_L} \langle  \phi | W_{B_t}^2|  \phi \rangle &\geq \frac{\gamma_\ell}{\ell(\ell+1)^{D-1}} \left(\sum_{i=0}^{\ell-1}c_i \right)\left( \sum_{i=0}^\ell d_i\right)^{D-1} \l\langle   \phi , \sum_{t \in \Lambda_L} W_{B_t}   \phi \r\rangle \\
     &= \frac{\gamma_\ell}{\ell(\ell+1)^{D-1}} \left(\sum_{i=0}^{\ell-1}c_i \right)^2\left( \sum_{i=0}^\ell d_i\right)^{2(D-1)} \langle \phi ,H_{\Lam_L},  \phi\rangle\\
     &=\gam_\ell \gam_L^{\mathrm{per}} K_4, 
\end{align*}
where the second step uses that in the sum $\sum_{t \in \Lambda_L} W_{B_t}$, each projector $h_{x,e_j}$ appears with coefficient $\sum_x w_{x,e_j}=K_0$ and the last step uses that $H_{\Lam_L}|\phi\rangle=\gamma_{\Lam_L}^{\mathrm{per}}|\phi\rangle$. 

Returning to \eqref{eq:appliedlm} und using that $\gam_L^{\mathrm{per}}>0$ by definition, we have shown that
\begin{equation}\label{eqn:rawthm1}
    \gamma_L^{\mathrm{per}} \geq \frac{K_4}{K_3}\left(\gamma_\ell - \frac{K_0+K_3-2K_1  }{K_4} \right)
\end{equation}
which proves Proposition \ref{prop:keygrid}.
\qed

\subsection{Parameters in 2 dimensions}\label{ssect:2Doptimize}
In this section we consider $D = 2$. 

\be{proof}[Proof of Theorem \ref{thm:main} for $D=2$]
We have Proposition \ref{prop:keygrid}  for all weights $c_i,d_i$ satisfying requirements (i)-(iii) above. For $D=2$, 
\begin{align*}
K_0 &= \left( \sum_{i=0}^{\ell-1} c_i^2\right) \left( \sum_{i=0}^\ell d_i^2\right),\qquad \quad\;\;
K_1 = \left(\sum_{i=0}^{\ell-2}c_ic_{i+1}\right)\left(\sum_{i=0}^\ell d_i^2 \right), \\
K_2 &= \left(\sum_{i=0}^{\ell-1}c_i^2\right)\left(\sum_{i=0}^{\ell-1}d_id_{i+1} \right),\qquad 
K_3 = \left( \sum_{i=0}^{\ell-1}c_id_i\right)^2,  \\
K_4 &= \frac{1}{\ell^2+\ell} \left(\sum_{i=0}^{\ell-1}c_i \right)^2\left( \sum_{i=0}^\ell d_i\right)^{2} 
\end{align*}
In view of \eqref{eq:propclaim}, the main goal is to choose the coefficients $c_i,d_i$ such that the threshold $$ \frac{K_0+K_3-2K_1  }{K_4}$$ is as small as possible. 

We set
\beq\label{eq:cjchoice}
\begin{aligned}
c_j &= \ell+(\ell -1)j-j^2,\qquad&\textnormal{for } j=0,\ldots,\ell-1, \\
d_j &= (1- \lambda )(\ell+1+\ell j-j^2) +  \lambda \frac{(\ell+2)^2}{4},\qquad&\textnormal{for } j=0,\ldots,\ell
\end{aligned}
\eeq
with interpolation parameter $ \lambda =\frac{2(\sqrt{2}-1)}{\ell}$. These satisfy the requirements (i)-(iii) from above.

%Note that since $2(\sqrt{2}-1) <1$, the weights $d_j$ will indeed be positive and symmetric up to the midpoint, thus verifying that $c_j,d_j$ satisfy the requirements.

It remains to explicit calculate the threshold $ \frac{K_0+K_3-2K_1  }{K_4}$ in terms of the polynomials $K_0,K_1,\ldots,K_4$ and to use various elementary error estimates. 

\textit{Asymptotics.} Before we present the rigorous error estimates, let us check that the result is correct to leading order in $\ell$. First, to check the constraint $K_3 \geq K_1, K_2$, we consider
\beq\label{eq:poscond}
\begin{aligned}
    K_3-K_1 &= \frac{2+\sqrt{2}}{360}\ell^7 +  \mathcal O(\ell^{6})\\
    K_3-K_2 &= \frac{5-3\sqrt{2}}{360}\ell^7 +  \mathcal O(\ell^{6}),
\end{aligned}
\eeq
which holds asymptotically as $\ell\to\infty$. We see that for large enough $\ell$ it will indeed be the case that $K_3 \geq K_1, K_2$. The threshold can be written as
\begin{equation}
\frac{K_0+K_3-2K_1}{K_4} = \frac{(\ell^2+\ell)(K_0+K_3-2K_1)}{\left(\sum_{i=0}^{\ell-1}c_i \right)^2\left( \sum_{i=0}^\ell d_i\right)^{2} }
\end{equation}
Here, after cancellations, the numerator is given by a degree 10 polynomial with leading coefficients of the form
\beq\label{eq:asymp1}
(\ell^2+\ell)(K_0+K_3-2K_1) = \frac{1}{180}\ell^{10} + \frac{28+3\sqrt{2}}{360}\ell^9 + \mathcal O(\ell^{10}).
\eeq
The denominator is a degree 12 polynomial with leading coefficients of the form
\beq\label{eq:asymp2}
\left(\sum_{i=0}^{\ell-1}c_i \right)^2\left( \sum_{i=0}^\ell d_i\right)^{2} = \frac{1}{1296}\ell^{12} + \frac{8+\sqrt{2}}{648}\ell^{11} +  \mathcal O(\ell^{10}).
\eeq
The asymptotic statements \eqref{eq:asymp1} and \eqref{eq:asymp2} already establish that the leading coefficient  is as displayed in Theorem \ref{thm:main}, i.e.,
\beq\label{eq:asympconclusion}
 \frac{K_0+K_3-2K_1}{K_4}<   \frac{36}{5\ell^2} + \mathcal O\left(\frac{1}{\ell^3}\right) 
 \eeq
 
 \textit{Error estimates.} We now present explicit error estimates that yield the claimed explicit error term in \eqref{eq:asympconclusion}
First, by explicitly computing all coefficients in \eqref{eq:poscond}, one can establish $K_3 \geq K_1, K_2$ as soon as $\ell \geq 3$.

Furthermore, through various elementary estimates we obtain the following precise versions of \eqref{eq:asymp1} and \eqref{eq:asymp2},
$$
(\ell^2+\ell)(K_0+K_1-2K_3)  < \frac{1}{180}\ell^{10} + \frac{1}{5}\ell^9,\qquad \ell\geq 10,
$$
and
$$
\left(\sum_{i=0}^{\ell-1}c_i \right)^2\left( \sum_{i=0}^\ell d_i\right)^{2}  > \frac{1}{1296}\ell^{12},\qquad \ell\geq 10.
$$
This allows to estimate the threshold by
$$
    \frac{K_0+K_3-2K_1}{K_4}< \frac{36}{5\ell^2} + \frac{1296}{5\ell^3}, \qquad\ell \geq 10.
$$
We recall Proposition \ref{prop:keygrid} and use 
$$
\frac{K_4}{K_3}  \geq \frac{25}{36}, \qquad\ell \geq 10.
$$
to obtain
\begin{equation}
\gamma_L^{\mathrm{per}} \geq \frac{25}{36}\left(\gamma_\ell -   \frac{36}{5\ell^2} - \frac{1296}{5\ell^3} \right), \qquad\ell \geq 10,
\end{equation}
which proves Theorem \ref{thm:main} for $D=2$.
\e{proof}

Let us briefly discuss the heuristics underlying the choice of coefficients \eqref{eq:cjchoice}, keeping in mind the goal of minimizing the threshold $\frac{K_0+K_3-2K_1}{K_4} $. As a first benchmark, notice that if we choose the coefficients constant, we obtain the threshold scaling $\sim \frac{1}{\ell}$. Considering the numerator $K_0+K_3-2K_1$, we complete a square to write $K_0-2K_1$ as a sum over $(c_{i+1}-c_i)^2$ which by taking a variational derivative naturally leads to $c_i$'s being an eigenfunction of the discrete 1D Laplacian, i.e., quadratic polynomials, cf.\ \eqref{eq:cjchoice}. Conveniently, this is a class of functions depending on only $3$ parameters over which we can then optimize. The other term in the numerator, $K_3= \left( \sum_{i=0}^{\ell-1}c_id_i\right)^2 $ can be viewed as a scalar product between the vector of $d_i$'s and the vector of $c_i$'s, so the Cauchy-Schwarz inequality says this is maximal if these vectors are collinear.
Finally, regarding the choice of interpolation parameter $ \lambda =\frac{2(\sqrt{2}-1)}{\ell}$, one can compute first that for $ \lambda = \frac{C}{\ell}$ each term $K_0, K_1, K_3$ is a degree 10 polynomials in $\ell$, with leading behavior $\frac{1}{900}\ell^8 + \frac{31}{1800}\ell^7$ independent of $C$. Thus $K_0+K_3-2K_1$ is a degree 8 polynomial in $\ell$. Since $K_4$ is $\frac{1}{\ell^2+\ell}$ times a degree 12 polynomial in $\ell$, this method indeed yields an $O\left(\frac{1}{\ell^2}\right)$ scaling and it remains to choose $C$ which we do by electing to minimize the leading term of the degree 8 polynomial $K_0+K_3-2K_1$ in the numerator subject to the inequalities $K_3\geq K_1, K_2$. 

For small $\ell$, namely $\ell\leq 9$, the leading term is no longer dominant and other choices of $C$ (respectively $t$) yield appreciably smaller threshold. In this case, we instead rely on elementary optimization which yields the numbers displayed in Table \ref{tab:table1}. 

%One finds that for any fixed $\ell$ the minimum will occur at $K_3 = K_2$. 

 %
%
%In practice, the bounds obtained are much better than the (admittedly crude) error terms derived here suggest. In the following table, we provide sample finite size criterion produced by this method for small values of $n$. The row labeled $\gamma_n$ here is the minimum value of the spectral gap of a subsystem $\mathcal{B}_n$ needed to imply that $H_N$ is gapped. We took $c_j,d_j$ as in the scheme above. Instead of choosing $t = \frac{2(\sqrt{2}-1)}{n}$, we numerically optimized in $t$ subject to our restrictions on $c_j,d_j$ and $K_1, K_2, K_3$. 

\subsection{Parameters in $D$ dimensions}\label{ssect:DDoptimize}

\be{proof}[Proof of Theorem \ref{thm:main}]
Let $D\geq 2$. We apply Proposition \ref{prop:keygrid} and analyze the formulae \eqref{eq:Kdefns}.
%\begin{align*}
%K_0 &= \left( \sum_{i=0}^{\ell-1} c_i^2\right) \left( \sum_{i=0}^\ell d_i^2\right)^{D-1},
%\qquad\quad\;\,
%    K_1=\left(\sum_{i=0}^{\ell-1}c_i^2\right)\left(\sum_{i=0}^{\ell-1}d_id_{i+1} \right)\left(\sum_{i=0}^\ell d_i^2 \right)^{D-2}, \\
%K_2 &= \left(\sum_{i=0}^{\ell-2}c_ic_{i+1}\right)\left(\sum_{i=0}^\ell d_i^2 \right)^{D-1},
%\qquad
%K_3 = \left( \sum_{i=0}^{\ell-1}c_id_i\right)^2 \left(\sum_{i=0}^{\ell} d_i^2 \right)^{D-2}, \\
%K_4&= \frac{1}{\ell(\ell+1)^{D-1}} \left(\sum_{i=0}^{\ell-1}c_i \right)^2\left( \sum_{i=0}^\ell d_i\right)^{2(D-1)} 
%\end{align*}
Our goal is again to minimize $\frac{K_0+K_3-2K_1}{K_4}$. With the exception of $K_4$, all other expressions given above are equal to their 2D counterparts scaled by an additional factor of $\left( \sum_{i=0}^\ell d_i^2\right)^{D-2}$. $K_4$ is equal to its 2D counterpart scaled by a factor of $(\ell+1)^{-(D-2)}\left( \sum_{i=0}^\ell d_i\right)^{2(D-2)}$. Thus, if we choose the coefficients $c_i,d_i$ as in \eqref{eq:cjchoice}, we obtain from \eqref{eq:asympconclusion} that
$$
\frac{K_0+K_3-2K_1}{K_4} =  \left(\frac{(\ell+1)\sum_{i=0}^\ell d_i^2 }{\left( \sum_{i=0}^\ell d_i\right)^2}\right)^{D-2}  \left( \frac{36}{5\ell^2}+O\left( \frac{1}{\ell^3}\right)\right).
$$
In order to derive error estimates, we note that both $(\ell+1)\sum_{i=0}^\ell d_i^2$ and $\left( \sum_{i=0}^\ell d_i\right)^2$ are degree 6 polynomials in $\ell$. By writing out these polynomials and using elementary estimates, we obtain the error estimate
$$
\frac{(\ell+1)\sum_{i=0}^\ell d_i^2 }{\left( \sum_{i=0}^\ell d_i\right)^2} \leq \frac{6}{5},\qquad \ell\geq 10.
$$
Similarly, one can show $\frac{K_4}{K_3} \geq \left( \frac{5}{6}\right)^D$, and so substituting in \eqref{eqn:rawthm1} we have
\begin{align*}
    \gamma_L^{\mathrm{per}} 
    %&\geq \left(\frac{5}{6}\right)^D \left(\gamma_\ell -  \left(  \frac{6}{5} \right)^{D-2}\left(\frac{36}{5\ell^2}+\frac{1296}{\ell^3} \right)\right) \\
    &\geq \left(\frac{5}{6}\right)^D \left( \gamma_\ell -  \left(\frac{6}{5}\right)^D\left(\frac{5}{\ell^2}+\frac{300}{\ell^3} \right) \right),
\end{align*}
which proves Theorem 1. 
\e{proof}

%The exponential dimension dependence, while unfortunate, seems inherent to our weighing scheme. In order to achieve a finite size criterion scaling like $O\left(\frac{1}{n^2}\right)$ it appears one must choose the weights to be close to our choice of $c_j,d_j$, but this forces the factor $\frac{(n+1)\sum_{i=0}^n d_i^2 }{\left( \sum_{i=0}^n d_i\right)^2}$ to be bounded away from 1. 

\section{Proof of Theorem \ref{thm:mainhoney} on the honeycomb lattice}
The proof follows the same general line of argumentation as the proof of Theorem \ref{thm:main} on the Euclidean lattice. The main difference is the definition of the weighting scheme which takes into account the local lattice geometry as described next.

\subsection{Construction of weighted subsystem Hamiltonians}
We let $c_0, \dotsc c_{\ell-1}$ and $d_0, \dotsc d_{\ell}$ be coefficients satisfying Requirements (i)-(iii) in Subsection \ref{ssect:coefficients} whose exact values will be determined later. We then define $W_{\mathcal B_\ell}$ analogously to \eqref{eq:WBgrid} as the pointwise product of two simpler weighting schemes, now on the slanted grid $B_\ell$, cf.\ Figure \ref{fig:hexslant}. To avoid introducing excessive notation for implementing this formally on the honeycomb lattice, we summarize the honeycomb weighting scheme through Figure \ref{fig:WB3honey} which generalizes to arbitrary $\ell$ in an obvious way.

\vspace{.5cm}
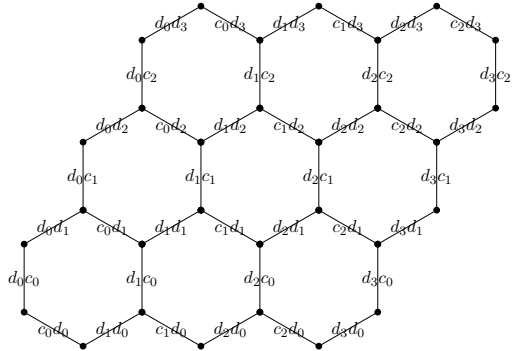
\begin{figure}[h]
\begin{center}
\resizebox{7cm}{!}{
\begin{tikzpicture}
\node at (1.2990,-0.7500) [circle,fill,inner sep=1.5pt]{};
\node at (1.2990,0.7500) [circle,fill,inner sep=1.5pt]{};
\node at (0.0000,1.5000) [circle,fill,inner sep=1.5pt]{};
\node at (-1.2990,0.7500) [circle,fill,inner sep=1.5pt]{};
\node at (-1.2990,-0.7500) [circle,fill,inner sep=1.5pt]{};
\node at (-0.0000,-1.5000) [circle,fill,inner sep=1.5pt]{};
\node at (2.5981,1.5000) [circle,fill,inner sep=1.5pt]{};
\node at (2.5981,3.0000) [circle,fill,inner sep=1.5pt]{};
\node at (1.2990,3.7500) [circle,fill,inner sep=1.5pt]{};
\node at (-0.0000,3.0000) [circle,fill,inner sep=1.5pt]{};
\node at (0.0000,1.5000) [circle,fill,inner sep=1.5pt]{};
\node at (1.2990,0.7500) [circle,fill,inner sep=1.5pt]{};
\node at (3.8971,3.7500) [circle,fill,inner sep=1.5pt]{};
\node at (3.8971,5.2500) [circle,fill,inner sep=1.5pt]{};
\node at (2.5981,6.0000) [circle,fill,inner sep=1.5pt]{};
\node at (1.2990,5.2500) [circle,fill,inner sep=1.5pt]{};
\node at (1.2990,3.7500) [circle,fill,inner sep=1.5pt]{};
\node at (2.5981,3.0000) [circle,fill,inner sep=1.5pt]{};
\node at (3.8971,-0.7500) [circle,fill,inner sep=1.5pt]{};
\node at (3.8971,0.7500) [circle,fill,inner sep=1.5pt]{};
\node at (2.5981,1.5000) [circle,fill,inner sep=1.5pt]{};
\node at (1.2990,0.7500) [circle,fill,inner sep=1.5pt]{};
\node at (1.2990,-0.7500) [circle,fill,inner sep=1.5pt]{};
\node at (2.5981,-1.5000) [circle,fill,inner sep=1.5pt]{};
\node at (5.1962,1.5000) [circle,fill,inner sep=1.5pt]{};
\node at (5.1962,3.0000) [circle,fill,inner sep=1.5pt]{};
\node at (3.8971,3.7500) [circle,fill,inner sep=1.5pt]{};
\node at (2.5981,3.0000) [circle,fill,inner sep=1.5pt]{};
\node at (2.5981,1.5000) [circle,fill,inner sep=1.5pt]{};
\node at (3.8971,0.7500) [circle,fill,inner sep=1.5pt]{};
\node at (6.4952,3.7500) [circle,fill,inner sep=1.5pt]{};
\node at (6.4952,5.2500) [circle,fill,inner sep=1.5pt]{};
\node at (5.1962,6.0000) [circle,fill,inner sep=1.5pt]{};
\node at (3.8971,5.2500) [circle,fill,inner sep=1.5pt]{};
\node at (3.8971,3.7500) [circle,fill,inner sep=1.5pt]{};
\node at (5.1962,3.0000) [circle,fill,inner sep=1.5pt]{};
\node at (6.4952,-0.7500) [circle,fill,inner sep=1.5pt]{};
\node at (6.4952,0.7500) [circle,fill,inner sep=1.5pt]{};
\node at (5.1962,1.5000) [circle,fill,inner sep=1.5pt]{};
\node at (3.8971,0.7500) [circle,fill,inner sep=1.5pt]{};
\node at (3.8971,-0.7500) [circle,fill,inner sep=1.5pt]{};
\node at (5.1962,-1.5000) [circle,fill,inner sep=1.5pt]{};
\node at (7.7942,1.5000) [circle,fill,inner sep=1.5pt]{};
\node at (7.7942,3.0000) [circle,fill,inner sep=1.5pt]{};
\node at (6.4952,3.7500) [circle,fill,inner sep=1.5pt]{};
\node at (5.1962,3.0000) [circle,fill,inner sep=1.5pt]{};
\node at (5.1962,1.5000) [circle,fill,inner sep=1.5pt]{};
\node at (6.4952,0.7500) [circle,fill,inner sep=1.5pt]{};
\node at (9.0933,3.7500) [circle,fill,inner sep=1.5pt]{};
\node at (9.0933,5.2500) [circle,fill,inner sep=1.5pt]{};
\node at (7.7942,6.0000) [circle,fill,inner sep=1.5pt]{};
\node at (6.4952,5.2500) [circle,fill,inner sep=1.5pt]{};
\node at (6.4952,3.7500) [circle,fill,inner sep=1.5pt]{};
\node at (7.7942,3.0000) [circle,fill,inner sep=1.5pt]{};
\draw(-1.2990, -0.7500)--(-0.0000, -1.5000) node[pos=.5]{$c_0d_0$};
\draw(0.0000, 1.5000)--(1.2990, 0.7500) node[pos=.5]{$c_0d_1$};
\draw(1.2990, 3.7500)--(2.5981, 3.0000) node[pos=.5]{$c_0d_2$};
\draw(2.5981, 6.0000)--(3.8971, 5.2500) node[pos=.5]{$c_0d_3$};
\draw(1.2990, -0.7500)--(2.5981, -1.5000) node[pos=.5]{$c_1d_0$};
\draw(2.5981, 1.5000)--(3.8971, 0.7500) node[pos=.5]{$c_1d_1$};
\draw(3.8971, 3.7500)--(5.1962, 3.0000) node[pos=.5]{$c_1d_2$};
\draw(5.1962, 6.0000)--(6.4952, 5.2500) node[pos=.5]{$c_1d_3$};
\draw(3.8971, -0.7500)--(5.1962, -1.5000) node[pos=.5]{$c_2d_0$};
\draw(5.1962, 1.5000)--(6.4952, 0.7500) node[pos=.5]{$c_2d_1$};
\draw(6.4952, 3.7500)--(7.7942, 3.0000) node[pos=.5]{$c_2d_2$};
\draw(7.7942, 6.0000)--(9.0933, 5.2500) node[pos=.5]{$c_2d_3$};
\draw(-1.2990, -0.7500)--(-1.2990, 0.7500) node[pos=.5]{$d_0c_0$};
\draw(0.0000, 1.5000)--(-0.0000, 3.0000) node[pos=.5]{$d_0c_1$};
\draw(1.2990, 3.7500)--(1.2990, 5.2500) node[pos=.5]{$d_0c_2$};
\draw(1.2990, -0.7500)--(1.2990, 0.7500) node[pos=.5]{$d_1c_0$};
\draw(2.5981, 1.5000)--(2.5981, 3.0000) node[pos=.5]{$d_1c_1$};
\draw(3.8971, 3.7500)--(3.8971, 5.2500) node[pos=.5]{$d_1c_2$};
\draw(3.8971, -0.7500)--(3.8971, 0.7500) node[pos=.5]{$d_2c_0$};
\draw(5.1962, 1.5000)--(5.1962, 3.0000) node[pos=.5]{$d_2c_1$};
\draw(6.4952, 3.7500)--(6.4952, 5.2500) node[pos=.5]{$d_2c_2$};
\draw(6.4952, -0.7500)--(6.4952, 0.7500) node[pos=.5]{$d_3c_0$};
\draw(7.7942, 1.5000)--(7.7942, 3.0000) node[pos=.5]{$d_3c_1$};
\draw(9.0933, 3.7500)--(9.0933, 5.2500) node[pos=.5]{$d_3c_2$};
\draw(-1.2990, 0.7500)--(-0.0000, 1.5000) node[pos=.5]{$d_0d_1$};
\draw(-0.0000, 3.0000)--(1.2990, 3.7500) node[pos=.5]{$d_0d_2$};
\draw(1.2990, 5.2500)--(2.5981, 6.0000) node[pos=.5]{$d_0d_3$};
\draw(-0.0000, -1.5000)--(1.2990, -0.7500) node[pos=.5]{$d_1d_0$};
\draw(1.2990, 0.7500)--(2.5981, 1.5000) node[pos=.5]{$d_1d_1$};
\draw(2.5981, 3.0000)--(3.8971, 3.7500) node[pos=.5]{$d_1d_2$};
\draw(3.8971, 5.2500)--(5.1962, 6.0000) node[pos=.5]{$d_1d_3$};
\draw(2.5981, -1.5000)--(3.8971, -0.7500) node[pos=.5]{$d_2d_0$};
\draw(3.8971, 0.7500)--(5.1962, 1.5000) node[pos=.5]{$d_2d_1$};
\draw(5.1962, 3.0000)--(6.4952, 3.7500) node[pos=.5]{$d_2d_2$};
\draw(6.4952, 5.2500)--(7.7942, 6.0000) node[pos=.5]{$d_2d_3$};
\draw(5.1962, -1.5000)--(6.4952, -0.7500) node[pos=.5]{$d_3d_0$};
\draw(6.4952, 0.7500)--(7.7942, 1.5000) node[pos=.5]{$d_3d_1$};
\draw(7.7942, 3.0000)--(9.0933, 3.7500) node[pos=.5]{$d_3d_2$};
\end{tikzpicture}
}
\caption{The weighted subsystem Hamiltonian $W_{\mathcal{B}_3}$ on slanted subgrids of the honeycomb lattice. Edges are labeled by their weights. This is the honeycomb analog of Figure \ref{fig:WB3grid}.}
\label{fig:WB3honey}
\end{center}
\end{figure}
\vspace{-.5cm}

As in the case of the square grid, the honeycomb weighting scheme displayed in Figure \ref{fig:WB3honey} is the pointwise product of the following two simpler weighting schemes. %This is analogous to the relationship between Figures \ref{fig:WB3grid} and \ref{fig:WB3gridproduct} on the square grid.

\vspace{.5cm}
\begin{figure}[h]
\begin{center}
\resizebox{11cm}{!}{
    \begin{minipage}{.4\textwidth}
    \begin{tikzpicture}
\node at (0.8660,-0.5000) [circle,fill,inner sep=1.5pt]{};
\node at (0.8660,0.5000) [circle,fill,inner sep=1.5pt]{};
\node at (0.0000,1.0000) [circle,fill,inner sep=1.5pt]{};
\node at (-0.8660,0.5000) [circle,fill,inner sep=1.5pt]{};
\node at (-0.8660,-0.5000) [circle,fill,inner sep=1.5pt]{};
\node at (-0.0000,-1.0000) [circle,fill,inner sep=1.5pt]{};
\node at (1.7321,1.0000) [circle,fill,inner sep=1.5pt]{};
\node at (1.7321,2.0000) [circle,fill,inner sep=1.5pt]{};
\node at (0.8660,2.5000) [circle,fill,inner sep=1.5pt]{};
\node at (-0.0000,2.0000) [circle,fill,inner sep=1.5pt]{};
\node at (0.0000,1.0000) [circle,fill,inner sep=1.5pt]{};
\node at (0.8660,0.5000) [circle,fill,inner sep=1.5pt]{};
\node at (2.5981,2.5000) [circle,fill,inner sep=1.5pt]{};
\node at (2.5981,3.5000) [circle,fill,inner sep=1.5pt]{};
\node at (1.7321,4.0000) [circle,fill,inner sep=1.5pt]{};
\node at (0.8660,3.5000) [circle,fill,inner sep=1.5pt]{};
\node at (0.8660,2.5000) [circle,fill,inner sep=1.5pt]{};
\node at (1.7321,2.0000) [circle,fill,inner sep=1.5pt]{};
\node at (2.5981,-0.5000) [circle,fill,inner sep=1.5pt]{};
\node at (2.5981,0.5000) [circle,fill,inner sep=1.5pt]{};
\node at (1.7321,1.0000) [circle,fill,inner sep=1.5pt]{};
\node at (0.8660,0.5000) [circle,fill,inner sep=1.5pt]{};
\node at (0.8660,-0.5000) [circle,fill,inner sep=1.5pt]{};
\node at (1.7321,-1.0000) [circle,fill,inner sep=1.5pt]{};
\node at (3.4641,1.0000) [circle,fill,inner sep=1.5pt]{};
\node at (3.4641,2.0000) [circle,fill,inner sep=1.5pt]{};
\node at (2.5981,2.5000) [circle,fill,inner sep=1.5pt]{};
\node at (1.7321,2.0000) [circle,fill,inner sep=1.5pt]{};
\node at (1.7321,1.0000) [circle,fill,inner sep=1.5pt]{};
\node at (2.5981,0.5000) [circle,fill,inner sep=1.5pt]{};
\node at (4.3301,2.5000) [circle,fill,inner sep=1.5pt]{};
\node at (4.3301,3.5000) [circle,fill,inner sep=1.5pt]{};
\node at (3.4641,4.0000) [circle,fill,inner sep=1.5pt]{};
\node at (2.5981,3.5000) [circle,fill,inner sep=1.5pt]{};
\node at (2.5981,2.5000) [circle,fill,inner sep=1.5pt]{};
\node at (3.4641,2.0000) [circle,fill,inner sep=1.5pt]{};
\node at (4.3301,-0.5000) [circle,fill,inner sep=1.5pt]{};
\node at (4.3301,0.5000) [circle,fill,inner sep=1.5pt]{};
\node at (3.4641,1.0000) [circle,fill,inner sep=1.5pt]{};
\node at (2.5981,0.5000) [circle,fill,inner sep=1.5pt]{};
\node at (2.5981,-0.5000) [circle,fill,inner sep=1.5pt]{};
\node at (3.4641,-1.0000) [circle,fill,inner sep=1.5pt]{};
\node at (5.1962,1.0000) [circle,fill,inner sep=1.5pt]{};
\node at (5.1962,2.0000) [circle,fill,inner sep=1.5pt]{};
\node at (4.3301,2.5000) [circle,fill,inner sep=1.5pt]{};
\node at (3.4641,2.0000) [circle,fill,inner sep=1.5pt]{};
\node at (3.4641,1.0000) [circle,fill,inner sep=1.5pt]{};
\node at (4.3301,0.5000) [circle,fill,inner sep=1.5pt]{};
\node at (6.0622,2.5000) [circle,fill,inner sep=1.5pt]{};
\node at (6.0622,3.5000) [circle,fill,inner sep=1.5pt]{};
\node at (5.1962,4.0000) [circle,fill,inner sep=1.5pt]{};
\node at (4.3301,3.5000) [circle,fill,inner sep=1.5pt]{};
\node at (4.3301,2.5000) [circle,fill,inner sep=1.5pt]{};
\node at (5.1962,2.0000) [circle,fill,inner sep=1.5pt]{};
\draw(-0.8660, -0.5000)--(-0.0000, -1.0000) node[pos=.5]{$d_0$};
\draw(0.0000, 1.0000)--(0.8660, 0.5000) node[pos=.5]{$d_1$};
\draw(0.8660, 2.5000)--(1.7321, 2.0000) node[pos=.5]{$d_2$};
\draw(1.7321, 4.0000)--(2.5981, 3.5000) node[pos=.5]{$d_3$};
\draw(0.8660, -0.5000)--(1.7321, -1.0000) node[pos=.5]{$d_0$};
\draw(1.7321, 1.0000)--(2.5981, 0.5000) node[pos=.5]{$d_1$};
\draw(2.5981, 2.5000)--(3.4641, 2.0000) node[pos=.5]{$d_2$};
\draw(3.4641, 4.0000)--(4.3301, 3.5000) node[pos=.5]{$d_3$};
\draw(2.5981, -0.5000)--(3.4641, -1.0000) node[pos=.5]{$d_0$};
\draw(3.4641, 1.0000)--(4.3301, 0.5000) node[pos=.5]{$d_1$};
\draw(4.3301, 2.5000)--(5.1962, 2.0000) node[pos=.5]{$d_2$};
\draw(5.1962, 4.0000)--(6.0622, 3.5000) node[pos=.5]{$d_3$};
\draw(-0.8660, -0.5000)--(-0.8660, 0.5000) node[pos=.5]{$c_0$};
\draw(0.0000, 1.0000)--(-0.0000, 2.0000) node[pos=.5]{$c_1$};
\draw(0.8660, 2.5000)--(0.8660, 3.5000) node[pos=.5]{$c_2$};
\draw(0.8660, -0.5000)--(0.8660, 0.5000) node[pos=.5]{$c_0$};
\draw(1.7321, 1.0000)--(1.7321, 2.0000) node[pos=.5]{$c_1$};
\draw(2.5981, 2.5000)--(2.5981, 3.5000) node[pos=.5]{$c_2$};
\draw(2.5981, -0.5000)--(2.5981, 0.5000) node[pos=.5]{$c_0$};
\draw(3.4641, 1.0000)--(3.4641, 2.0000) node[pos=.5]{$c_1$};
\draw(4.3301, 2.5000)--(4.3301, 3.5000) node[pos=.5]{$c_2$};
\draw(4.3301, -0.5000)--(4.3301, 0.5000) node[pos=.5]{$c_0$};
\draw(5.1962, 1.0000)--(5.1962, 2.0000) node[pos=.5]{$c_1$};
\draw(6.0622, 2.5000)--(6.0622, 3.5000) node[pos=.5]{$c_2$};
\draw(-0.8660, 0.5000)--(-0.0000, 1.0000) node[pos=.5]{$d_1$};
\draw(-0.0000, 2.0000)--(0.8660, 2.5000) node[pos=.5]{$d_2$};
\draw(0.8660, 3.5000)--(1.7321, 4.0000) node[pos=.5]{$d_3$};
\draw(-0.0000, -1.0000)--(0.8660, -0.5000) node[pos=.5]{$d_0$};
\draw(0.8660, 0.5000)--(1.7321, 1.0000) node[pos=.5]{$d_1$};
\draw(1.7321, 2.0000)--(2.5981, 2.5000) node[pos=.5]{$d_2$};
\draw(2.5981, 3.5000)--(3.4641, 4.0000) node[pos=.5]{$d_3$};
\draw(1.7321, -1.0000)--(2.5981, -0.5000) node[pos=.5]{$d_0$};
\draw(2.5981, 0.5000)--(3.4641, 1.0000) node[pos=.5]{$d_1$};
\draw(3.4641, 2.0000)--(4.3301, 2.5000) node[pos=.5]{$d_2$};
\draw(4.3301, 3.5000)--(5.1962, 4.0000) node[pos=.5]{$d_3$};
\draw(3.4641, -1.0000)--(4.3301, -0.5000) node[pos=.5]{$d_0$};
\draw(4.3301, 0.5000)--(5.1962, 1.0000) node[pos=.5]{$d_1$};
\draw(5.1962, 2.0000)--(6.0622, 2.5000) node[pos=.5]{$d_2$};
\end{tikzpicture}
    \end{minipage}
    $\qquad\times$
    \begin{minipage}{.4\textwidth}
    \begin{tikzpicture}
\node at (0.8660,-0.5000) [circle,fill,inner sep=1.5pt]{};
\node at (0.8660,0.5000) [circle,fill,inner sep=1.5pt]{};
\node at (0.0000,1.0000) [circle,fill,inner sep=1.5pt]{};
\node at (-0.8660,0.5000) [circle,fill,inner sep=1.5pt]{};
\node at (-0.8660,-0.5000) [circle,fill,inner sep=1.5pt]{};
\node at (-0.0000,-1.0000) [circle,fill,inner sep=1.5pt]{};
\node at (1.7321,1.0000) [circle,fill,inner sep=1.5pt]{};
\node at (1.7321,2.0000) [circle,fill,inner sep=1.5pt]{};
\node at (0.8660,2.5000) [circle,fill,inner sep=1.5pt]{};
\node at (-0.0000,2.0000) [circle,fill,inner sep=1.5pt]{};
\node at (0.0000,1.0000) [circle,fill,inner sep=1.5pt]{};
\node at (0.8660,0.5000) [circle,fill,inner sep=1.5pt]{};
\node at (2.5981,2.5000) [circle,fill,inner sep=1.5pt]{};
\node at (2.5981,3.5000) [circle,fill,inner sep=1.5pt]{};
\node at (1.7321,4.0000) [circle,fill,inner sep=1.5pt]{};
\node at (0.8660,3.5000) [circle,fill,inner sep=1.5pt]{};
\node at (0.8660,2.5000) [circle,fill,inner sep=1.5pt]{};
\node at (1.7321,2.0000) [circle,fill,inner sep=1.5pt]{};
\node at (2.5981,-0.5000) [circle,fill,inner sep=1.5pt]{};
\node at (2.5981,0.5000) [circle,fill,inner sep=1.5pt]{};
\node at (1.7321,1.0000) [circle,fill,inner sep=1.5pt]{};
\node at (0.8660,0.5000) [circle,fill,inner sep=1.5pt]{};
\node at (0.8660,-0.5000) [circle,fill,inner sep=1.5pt]{};
\node at (1.7321,-1.0000) [circle,fill,inner sep=1.5pt]{};
\node at (3.4641,1.0000) [circle,fill,inner sep=1.5pt]{};
\node at (3.4641,2.0000) [circle,fill,inner sep=1.5pt]{};
\node at (2.5981,2.5000) [circle,fill,inner sep=1.5pt]{};
\node at (1.7321,2.0000) [circle,fill,inner sep=1.5pt]{};
\node at (1.7321,1.0000) [circle,fill,inner sep=1.5pt]{};
\node at (2.5981,0.5000) [circle,fill,inner sep=1.5pt]{};
\node at (4.3301,2.5000) [circle,fill,inner sep=1.5pt]{};
\node at (4.3301,3.5000) [circle,fill,inner sep=1.5pt]{};
\node at (3.4641,4.0000) [circle,fill,inner sep=1.5pt]{};
\node at (2.5981,3.5000) [circle,fill,inner sep=1.5pt]{};
\node at (2.5981,2.5000) [circle,fill,inner sep=1.5pt]{};
\node at (3.4641,2.0000) [circle,fill,inner sep=1.5pt]{};
\node at (4.3301,-0.5000) [circle,fill,inner sep=1.5pt]{};
\node at (4.3301,0.5000) [circle,fill,inner sep=1.5pt]{};
\node at (3.4641,1.0000) [circle,fill,inner sep=1.5pt]{};
\node at (2.5981,0.5000) [circle,fill,inner sep=1.5pt]{};
\node at (2.5981,-0.5000) [circle,fill,inner sep=1.5pt]{};
\node at (3.4641,-1.0000) [circle,fill,inner sep=1.5pt]{};
\node at (5.1962,1.0000) [circle,fill,inner sep=1.5pt]{};
\node at (5.1962,2.0000) [circle,fill,inner sep=1.5pt]{};
\node at (4.3301,2.5000) [circle,fill,inner sep=1.5pt]{};
\node at (3.4641,2.0000) [circle,fill,inner sep=1.5pt]{};
\node at (3.4641,1.0000) [circle,fill,inner sep=1.5pt]{};
\node at (4.3301,0.5000) [circle,fill,inner sep=1.5pt]{};
\node at (6.0622,2.5000) [circle,fill,inner sep=1.5pt]{};
\node at (6.0622,3.5000) [circle,fill,inner sep=1.5pt]{};
\node at (5.1962,4.0000) [circle,fill,inner sep=1.5pt]{};
\node at (4.3301,3.5000) [circle,fill,inner sep=1.5pt]{};
\node at (4.3301,2.5000) [circle,fill,inner sep=1.5pt]{};
\node at (5.1962,2.0000) [circle,fill,inner sep=1.5pt]{};
\draw(-0.8660, -0.5000)--(-0.0000, -1.0000) node[pos=.5]{$c_0$};
\draw(0.0000, 1.0000)--(0.8660, 0.5000) node[pos=.5]{$c_0$};
\draw(0.8660, 2.5000)--(1.7321, 2.0000) node[pos=.5]{$c_0$};
\draw(1.7321, 4.0000)--(2.5981, 3.5000) node[pos=.5]{$c_0$};
\draw(0.8660, -0.5000)--(1.7321, -1.0000) node[pos=.5]{$c_1$};
\draw(1.7321, 1.0000)--(2.5981, 0.5000) node[pos=.5]{$c_1$};
\draw(2.5981, 2.5000)--(3.4641, 2.0000) node[pos=.5]{$c_1$};
\draw(3.4641, 4.0000)--(4.3301, 3.5000) node[pos=.5]{$c_1$};
\draw(2.5981, -0.5000)--(3.4641, -1.0000) node[pos=.5]{$c_2$};
\draw(3.4641, 1.0000)--(4.3301, 0.5000) node[pos=.5]{$c_2$};
\draw(4.3301, 2.5000)--(5.1962, 2.0000) node[pos=.5]{$c_2$};
\draw(5.1962, 4.0000)--(6.0622, 3.5000) node[pos=.5]{$c_2$};
\draw(-0.8660, -0.5000)--(-0.8660, 0.5000) node[pos=.5]{$d_0$};
\draw(0.0000, 1.0000)--(-0.0000, 2.0000) node[pos=.5]{$d_0$};
\draw(0.8660, 2.5000)--(0.8660, 3.5000) node[pos=.5]{$d_0$};
\draw(0.8660, -0.5000)--(0.8660, 0.5000) node[pos=.5]{$d_1$};
\draw(1.7321, 1.0000)--(1.7321, 2.0000) node[pos=.5]{$d_1$};
\draw(2.5981, 2.5000)--(2.5981, 3.5000) node[pos=.5]{$d_1$};
\draw(2.5981, -0.5000)--(2.5981, 0.5000) node[pos=.5]{$d_2$};
\draw(3.4641, 1.0000)--(3.4641, 2.0000) node[pos=.5]{$d_2$};
\draw(4.3301, 2.5000)--(4.3301, 3.5000) node[pos=.5]{$d_2$};
\draw(4.3301, -0.5000)--(4.3301, 0.5000) node[pos=.5]{$d_3$};
\draw(5.1962, 1.0000)--(5.1962, 2.0000) node[pos=.5]{$d_3$};
\draw(6.0622, 2.5000)--(6.0622, 3.5000) node[pos=.5]{$d_3$};
\draw(-0.8660, 0.5000)--(-0.0000, 1.0000) node[pos=.5]{$d_0$};
\draw(-0.0000, 2.0000)--(0.8660, 2.5000) node[pos=.5]{$d_0$};
\draw(0.8660, 3.5000)--(1.7321, 4.0000) node[pos=.5]{$d_0$};
\draw(-0.0000, -1.0000)--(0.8660, -0.5000) node[pos=.5]{$d_1$};
\draw(0.8660, 0.5000)--(1.7321, 1.0000) node[pos=.5]{$d_1$};
\draw(1.7321, 2.0000)--(2.5981, 2.5000) node[pos=.5]{$d_1$};
\draw(2.5981, 3.5000)--(3.4641, 4.0000) node[pos=.5]{$d_1$};
\draw(1.7321, -1.0000)--(2.5981, -0.5000) node[pos=.5]{$d_2$};
\draw(2.5981, 0.5000)--(3.4641, 1.0000) node[pos=.5]{$d_2$};
\draw(3.4641, 2.0000)--(4.3301, 2.5000) node[pos=.5]{$d_2$};
\draw(4.3301, 3.5000)--(5.1962, 4.0000) node[pos=.5]{$d_2$};
\draw(3.4641, -1.0000)--(4.3301, -0.5000) node[pos=.5]{$d_3$};
\draw(4.3301, 0.5000)--(5.1962, 1.0000) node[pos=.5]{$d_3$};
\draw(5.1962, 2.0000)--(6.0622, 2.5000) node[pos=.5]{$d_3$};
\end{tikzpicture}
    \end{minipage}
    }
    \caption{The weighted subsystem Hamiltonian $W_{\mathcal{B}_3}$ displayed in Figure \ref{fig:WB3honey} arises as the pointwise product of these two simpler weighting schemes; compare Figure \ref{fig:WB3gridproduct}.}
\end{center}
\end{figure}
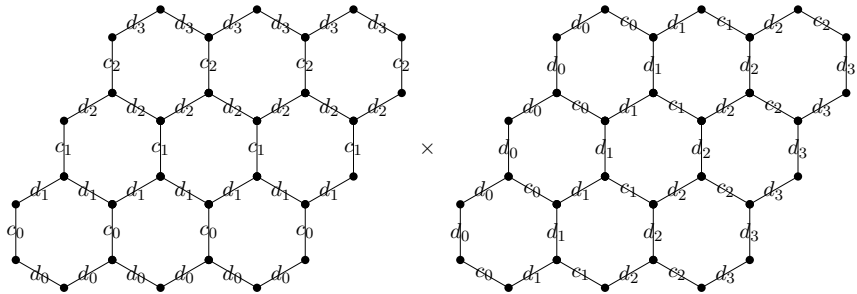
\vspace{-.5cm}

In the next few subsections, we use the weighted subsystem Hamiltonians to prove an analog of Proposition \ref{prop:keygrid} for the honeycomb lattice. Afterwards, in Subsections \ref{ssect:honeyoptimize}, we discuss how to choose the parameters $c_i$ and $d_i$ to optimize the relevant local gap threshold which will complete the proof of Theorem \ref{thm:mainhoney}.

%As before, we assume the weights $c_i,d_i$ are positive, symmetric and increasing up until the midpoint. We then set the weights of the edges of the Hamiltonian $W_{\mathcal{B}_n}$ as the product of the corresponding weights in the two simple Hamiltonians. 

\subsection{Expansion of the auxiliary operator}

We let $\mathcal T_L$ denote the set of possible translations of the plaquette $\mathcal B_\ell$ across $\mathbb H_L$, which we recall is the honeycomb lattice wrapped on an $L\times L$ torus. Given $t\in \mathcal T_L$, we write $W_{B_t}$ for the correspondingly translated weighted Hamiltonian. 

Combinatorial considerations parallel to those in Subsection \ref{ssect:expansionaux} yield, after implementing the symmetry requirement (ii) in Subsection \ref{ssect:coefficients},
\beq\label{eq:honeyauxnaive}
\begin{aligned}
\sum_{t\in\mathcal T_L} W_{B_t}^2 
    =&\left(\left(\sum_{i=0}^\ell d_i^2 \right)^2 - 2d_0^4 \right)\sum_{\begin{tikzpicture}
    \draw[thick](-.3,.15) -- (0,.3);
    \node at (-.3,.15)[circle,fill,inner sep=1pt]{};
    \node at (0,.3)[circle,fill,inner sep=1pt]{};
    \end{tikzpicture}} h_e \\
    &+ \left(\sum_{i=0}^\ell d_i^2 \right) \left(\sum_{i=0}^{\ell-1} c_i^2 \right)  \left(\sum_{\begin{tikzpicture}
    \draw[thick](-.3,.3) -- (0,.15);
    \node at (-.3,.3)[circle,fill,inner sep=1pt]{};
    \node at (0,.15)[circle,fill,inner sep=1pt]{};
    \end{tikzpicture}} h_e+\sum_{\begin{tikzpicture}
    \draw[thick](-.15,0) -- (-.15,.-.3);
    \node at (-.15,0)[circle,fill,inner sep=1pt]{};
    \node at (-.15,-.3)[circle,fill,inner sep=1pt]{};
    \end{tikzpicture}} h_e \right)\\
    &+ \left(\sum_{i=0}^{\ell-1} d_i^2\right) \left( \sum_{i=0}^{\ell-1} c_id_{i}\right) \left(\sum_{
     \begin{tikzpicture}
    \draw[thick](.1,0) -- (-.15,.-.15);
    \draw[thick](-.15,-.15)--(-.15,-.45);
    \node at (.1,0)[circle,fill,inner sep=1pt]{};
    \node at (-.15,-.15)[circle,fill,inner sep=1pt]{};
    \node at (-.15, -.45)[circle,fill,inner sep=1pt]{};
    \end{tikzpicture}
    }\{h_e, h_{e'}\}  +\sum_{
    \begin{tikzpicture}
    \draw[thick](.1,0) -- (.1,.-.3);
    \draw[thick](.1,-.3)--(-.15,-.45);
    \node at (.1,0)[circle,fill,inner sep=1pt]{};
    \node at (.1,-.3)[circle,fill,inner sep=1pt]{};
    \node at (-.15, -.45)[circle,fill,inner sep=1pt]{};
    \end{tikzpicture}
    }\{h_e, h_{e'}\} \right) 
    \\
     &+  \left( \sum_{i=0}^{\ell-1} c_id_i\right)^2 \left(\sum_{
    \begin{tikzpicture}
    \draw[thick](-.15,0) -- (-.15,.-.3);
    \draw[thick](-.15,-.3)--(.1,-.45);
    \node at (-.15,0)[circle,fill,inner sep=1pt]{};
    \node at (-.15,-.3)[circle,fill,inner sep=1pt]{};
    \node at (.1, -.45)[circle,fill,inner sep=1pt]{};
    \end{tikzpicture}
    }\{h_e, h_{e'}\}  +\sum_{
    \begin{tikzpicture}
    \draw[thick](-.15,0) -- (.1,.-.15);
    \draw[thick](.1,-.15)--(.1,-.45);
    \node at (-.15,0)[circle,fill,inner sep=1pt]{};
    \node at (.1,-.15)[circle,fill,inner sep=1pt]{};
    \node at (.1, -.45)[circle,fill,inner sep=1pt]{};
    \end{tikzpicture}
    }\{h_e, h_{e'}\} \right) 
    \\
     &+ \left(\left(\sum_{i=0}^\ell d_i^2\right) \left( \sum_{i=0}^{\ell-1} c_id_{i}\right)-c_0d_0^3\right) \left(\sum_{
    \begin{tikzpicture}
    \draw[thick](-.45,-.15) -- (-.2,0);
    \draw[thick](-.2,0)--(.05,-.15);
    \node at (-.45,-.15)[circle,fill,inner sep=1pt]{};
    \node at (-.2,0)[circle,fill,inner sep=1pt]{};
    \node at (.05, -.15)[circle,fill,inner sep=1pt]{};
    \end{tikzpicture}
    }\{h_e, h_{e'}\}  +\sum_{
    \begin{tikzpicture}
        \draw[thick](-.45,0) -- (-.2,-.15);
    \draw[thick](-.2,-.15)--(.05,0);
    \node at (-.45,0)[circle,fill,inner sep=1pt]{};
    \node at (-.2,-.15)[circle,fill,inner sep=1pt]{};
    \node at (.05, 0)[circle,fill,inner sep=1pt]{};
    \end{tikzpicture}
    }\{h_e, h_{e'}\} \right) \\
    &+R,
\end{aligned}
\eeq
Here, the various sums are taken over all edges with the indicated shape and $R$ denotes the remaining sums over anticommutators corresponding to nonadjacent edges. %Note that we have already simplified some of the expressions by using the assumption that the $c_i,d_i$ are symmetric.

Notice that edges pointing in different directions do not appear with equal weights in the above expression. To remedy this, we consider the  following direction-averaged auxiliary operator. We denote $W_{B_l}^{(0)} = W_{B_l}$ and, rotating it by 60 degrees clockwise, respectively counterclockwise, we obtain $W_{B_l}^{(1)}$ and $W_{B_l}^{(2)}$. 
%
%We set
%\begin{align*}
%A= \sum_{r=0}^2 \sum_l (W_{B_l}^{(r)})^2 
%\end{align*}

Summing formula \eqref{eq:honeyauxnaive} rotations and denoting $H_{\mathbb H_L}\equiv H_L$, we obtain
\beq\label{eq:honeyrecall}
K_1H_L^2 - \sum_{r=0}^2 \sum_{t\in\mathcal T_L} (W_{B_t}^{(r)})^2 \geq (K_1-K_0) H_L
\eeq
with the effective coefficients
\begin{align*}
    K_0&= 2\left(\sum_{i=0}^\ell d_i^2\right)\left(\sum_{i=0}^{\ell-1} c_i^2\right) + \left(\sum_{i=0}^\ell d_i^2\right)^2-2d_0^4 \\
    K_1&=\left(\sum_{i=0}^\ell d_i^2\right) \left( \sum_{i=0}^{\ell-1} c_id_{i}\right)+\left(\sum_{i=0}^{\ell-1} d_i^2\right) \left( \sum_{i=0}^{\ell-1} c_id_{i}\right)+ \left( \sum_{i=0}^{\ell-1}c_id_i\right)^2 -c_0d_0^3
\end{align*}
assuming that $K_1$ is larger than all coefficients in the remainder $R$. Under our choice of coefficients specified below, this will indeed be the case.

%The point is that we will choose the $c_i,d_i$ to be symmetric with respect to their midpoints, which ensures that as long as we check that $K_1$ is larger than the coefficients corresponding to sufficiently close pairs of edges $K_1$ will be larger than all coefficients in $R$. This verification boils down to computing polynomials and comparing coefficients

\subsection{Translation-invariant excited states}
We have the following analog of Lemma \ref{lm:translationgrid}.

\be{lm}\label{lm:translationhoney}
There exists a normalized state $|\phi\rangle$ satisfying $H_{\mathbb H_L}|\phi\rangle=\gam_L^{\mathrm{per}}|\phi\rangle$ such that
\begin{align*}
\langle \phi , (W_{B_t}^{(r)})^2  \phi \rangle &
\geq \gamma_\ell K_2 \langle \phi , W_{B_t}^{(r)} \phi \rangle,\qquad t\in\mathcal T_l,\quad r=0,1,2,
\end{align*}
with 
$$
K_2=\min\left\{\frac{1}{\ell^2+\ell}\left(\sum_{i=0}^\ell d_i\right)\left(\sum_{i=0}^{\ell-1} c_i\right) , \frac{1}{\ell^2+2\ell-1}\left(\left(\sum_{i=0}^\ell d_i\right)^2-2d_0^2\right)\right\}.
$$
\e{lm}

\be{proof}
This is a straightforward adaptation of the proof of Lemma \ref{lm:translationgrid}.
\e{proof}

We take the expectation of the operator inequality \eqref{eq:honeyrecall} in the state $|\phi\rangle$ and apply Lemma \ref{lm:translationhoney}. This gives
$$
\begin{aligned}
K_1 (\gamma^{\mathrm{per}}_L )^2
\geq& K_2 \gamma_\ell   \left\langle \phi, \sum_{r=0}^2 \sum_{t\in\mathcal T_L} W_{B_t}^{(r)} \phi \right\rangle
+ (K_1-K_0)\gamma^{\mathrm{per}}_L\\
=&K_3 \gamma_\ell \gamma^{\mathrm{per}}_L
+ (K_1-K_0)\gamma^{\mathrm{per}}_L
\end{aligned}
$$
where the second step follows from combinatorial considerations yielding the effective coefficient
$$
K_3 = K_2 \left(2\left(\sum_{i=0}^\ell d_i\right)\left(\sum_{i=0}^{\ell-1} c_i\right) + \left(\sum_{i=0}^\ell d_i\right)^2-2d_0^2 \right).
$$
This proves the following general finite-size criterion on the honeycomb lattice
\beq\label{eq:generalhoney}
\gamma_L^{\mathrm{per}} \geq \frac{K_3}{K_1}\left(\gamma_\ell - \frac{K_0-K_1}{K_3}\right).
\eeq

\subsection{Coefficient choice and conclusion}\label{ssect:honeyoptimize}
\be{proof}[Proof of Theorem \ref{thm:mainhoney}]
In view of \eqref{eq:generalhoney}, it remains to choose the coefficients $c_i,d_i$ so as to optimize the threshold $t_\ell=\frac{K_0-K_1}{K_3}$. Motivated by the considerations in the Euclidean situation, we set
\beq
\begin{aligned}
c_j &= \ell+(\ell-1)j-j^2 \\
d_j &= (1-\lambda)(\ell+1+\ell j-j^2) +  \lambda\left(\frac{(\ell+2)^2}{4}\right)
\end{aligned}
\eeq
but this time we choose the tilting parameter $\lambda = -\frac{30}{11\ell}$. (This is obtained by setting $ \lambda = \frac{C}{\ell}$ and optimizing the leading term of the resulting polynomial.) Note that the fact that $\lambda<0$ is no obstruction since for large enough $\ell$ we still have positivity $d_j>0$ as required . 

\textit{Asymptotics.}
Under this choice of $\lambda$, we calculate the leading terms of the relevant ratios of polynomials and find
\beq\label{eq:K0K1honey}
\begin{aligned}
    K_0 - K_1 &= \frac{19}{1980}\ell^{8} + \frac{1076}{5445}\ell^7 + O(\ell^6) \\
    K_3 &= \frac{1}{432}\ell^{10} + \frac{13}{264}\ell^{9} + O(\ell^8).
\end{aligned}
\eeq
which yields a local gap threshold of the form
$$
\frac{K_0-K_1}{K_3}\geq \frac{228}{55\ell^2} + O\left(\frac{1}{\ell^3}\right),
$$
which has the claimed leading asymptotics.

\textit{Error estimates.}
We can obtain an explicit, still rather crude error estimate by inspecting the coefficients of the polynomials. This yields for the local gap threshold
$$
\frac{K_0-K_1}{K_3}\leq  \frac{228}{55\ell^2} + \frac{108}{\ell^3}
$$
Finally, we obtain \eqref{eqn:hexagongap} in Theorem \ref{thm:mainhoney} from the bound $\frac{K_3}{K_1} \geq \frac{1}{2}$. The values displayed in Table \ref{tab:table3} are obtained by explicit optimization of $\lambda$ for small $\ell$-values subject to the given constraints.
\e{proof}

\section{Proof of Theorem \ref{thm:maintri} on the triangular lattice}
We follow the line of argumentation for Theorems \ref{thm:main} and \ref{thm:mainhoney}. We focus on the key differences and omit other details.

%We adopt the same notation as before. Here, we show 
%\begin{thm}\label{eqn:trianglegap}
%Let $n \geq 5, N > 2n$. Then,
%\begin{equation}
%\gamma_N \geq \frac{1}{2}\left(\gamma_n - \left(\frac{144}{5n^2} + \frac{432}{n^3}\right) \right).
%\end{equation}
%\end{thm}

To define the weighted subsystem Hamiltonians we use the weighting scheme depicted in Figure \ref{fig:WB3tri}, which again can be viewed as arising from taking the products of edges from simpler Hamiltonians. 

\vspace{.5cm}
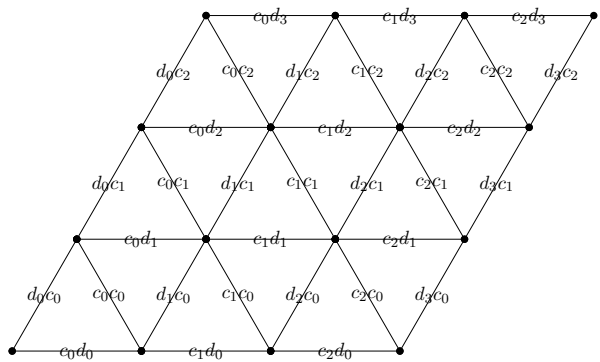
\begin{figure}[h]
\begin{center}
\resizebox{8cm}{!}{
\begin{tikzpicture}
\node at (2.5981,-0.7500) [circle,fill,inner sep=1.5pt]{};
\node at (1.2990,1.5000) [circle,fill,inner sep=1.5pt]{};
\node at (0.0000,-0.7500) [circle,fill,inner sep=1.5pt]{};
\node at (2.5981,-0.7500) [circle,fill,inner sep=1.5pt]{};
\node at (1.2990,1.5000) [circle,fill,inner sep=1.5pt]{};
\node at (-0.0000,-0.7500) [circle,fill,inner sep=1.5pt]{};
\node at (3.8971,1.5000) [circle,fill,inner sep=1.5pt]{};
\node at (2.5981,3.7500) [circle,fill,inner sep=1.5pt]{};
\node at (1.2990,1.5000) [circle,fill,inner sep=1.5pt]{};
\node at (3.8971,1.5000) [circle,fill,inner sep=1.5pt]{};
\node at (2.5981,3.7500) [circle,fill,inner sep=1.5pt]{};
\node at (1.2990,1.5000) [circle,fill,inner sep=1.5pt]{};
\node at (5.1962,3.7500) [circle,fill,inner sep=1.5pt]{};
\node at (3.8971,6.0000) [circle,fill,inner sep=1.5pt]{};
\node at (2.5981,3.7500) [circle,fill,inner sep=1.5pt]{};
\node at (5.1962,3.7500) [circle,fill,inner sep=1.5pt]{};
\node at (3.8971,6.0000) [circle,fill,inner sep=1.5pt]{};
\node at (2.5981,3.7500) [circle,fill,inner sep=1.5pt]{};
\node at (5.1962,-0.7500) [circle,fill,inner sep=1.5pt]{};
\node at (3.8971,1.5000) [circle,fill,inner sep=1.5pt]{};
\node at (2.5981,-0.7500) [circle,fill,inner sep=1.5pt]{};
\node at (5.1962,-0.7500) [circle,fill,inner sep=1.5pt]{};
\node at (3.8971,1.5000) [circle,fill,inner sep=1.5pt]{};
\node at (2.5981,-0.7500) [circle,fill,inner sep=1.5pt]{};
\node at (6.4952,1.5000) [circle,fill,inner sep=1.5pt]{};
\node at (5.1962,3.7500) [circle,fill,inner sep=1.5pt]{};
\node at (3.8971,1.5000) [circle,fill,inner sep=1.5pt]{};
\node at (6.4952,1.5000) [circle,fill,inner sep=1.5pt]{};
\node at (5.1962,3.7500) [circle,fill,inner sep=1.5pt]{};
\node at (3.8971,1.5000) [circle,fill,inner sep=1.5pt]{};
\node at (7.7942,3.7500) [circle,fill,inner sep=1.5pt]{};
\node at (6.4952,6.0000) [circle,fill,inner sep=1.5pt]{};
\node at (5.1962,3.7500) [circle,fill,inner sep=1.5pt]{};
\node at (7.7942,3.7500) [circle,fill,inner sep=1.5pt]{};
\node at (6.4952,6.0000) [circle,fill,inner sep=1.5pt]{};
\node at (5.1962,3.7500) [circle,fill,inner sep=1.5pt]{};
\node at (7.7942,-0.7500) [circle,fill,inner sep=1.5pt]{};
\node at (6.4952,1.5000) [circle,fill,inner sep=1.5pt]{};
\node at (5.1962,-0.7500) [circle,fill,inner sep=1.5pt]{};
\node at (7.7942,-0.7500) [circle,fill,inner sep=1.5pt]{};
\node at (6.4952,1.5000) [circle,fill,inner sep=1.5pt]{};
\node at (5.1962,-0.7500) [circle,fill,inner sep=1.5pt]{};
\node at (9.0933,1.5000) [circle,fill,inner sep=1.5pt]{};
\node at (7.7942,3.7500) [circle,fill,inner sep=1.5pt]{};
\node at (6.4952,1.5000) [circle,fill,inner sep=1.5pt]{};
\node at (9.0933,1.5000) [circle,fill,inner sep=1.5pt]{};
\node at (7.7942,3.7500) [circle,fill,inner sep=1.5pt]{};
\node at (6.4952,1.5000) [circle,fill,inner sep=1.5pt]{};
\node at (10.3923,3.7500) [circle,fill,inner sep=1.5pt]{};
\node at (9.0933,6.0000) [circle,fill,inner sep=1.5pt]{};
\node at (7.7942,3.7500) [circle,fill,inner sep=1.5pt]{};
\node at (10.3923,3.7500) [circle,fill,inner sep=1.5pt]{};
\node at (9.0933,6.0000) [circle,fill,inner sep=1.5pt]{};
\node at (7.7942,3.7500) [circle,fill,inner sep=1.5pt]{};
\node at (11.6913,6.0000) [circle,fill,inner sep=1.5pt]{};
\draw(0.0000, -0.7500)--(2.5981, -0.7500) node[pos=.5]{$c_0d_0$};
\draw(1.2990, 1.5000)--(3.8971, 1.5000) node[pos=.5]{$c_0d_1$};
\draw(2.5981, 3.7500)--(5.1962, 3.7500) node[pos=.5]{$c_0d_2$};
\draw(3.8971, 6.0000)--(6.4952, 6.0000) node[pos=.5]{$c_0d_3$};
\draw(2.5981, -0.7500)--(5.1962, -0.7500) node[pos=.5]{$c_1d_0$};
\draw(3.8971, 1.5000)--(6.4952, 1.5000) node[pos=.5]{$c_1d_1$};
\draw(5.1962, 3.7500)--(7.7942, 3.7500) node[pos=.5]{$c_1d_2$};
\draw(6.4952, 6.0000)--(9.0933, 6.0000) node[pos=.5]{$c_1d_3$};
\draw(5.1962, -0.7500)--(7.7942, -0.7500) node[pos=.5]{$c_2d_0$};
\draw(6.4952, 1.5000)--(9.0933, 1.5000) node[pos=.5]{$c_2d_1$};
\draw(7.7942, 3.7500)--(10.3923, 3.7500) node[pos=.5]{$c_2d_2$};
\draw(9.0933, 6.0000)--(11.6913, 6.0000) node[pos=.5]{$c_2d_3$};
\draw(1.2990, 1.5000)--(0.0000, -0.7500) node[pos=.5]{$d_0c_0$};
\draw(2.5981, 3.7500)--(1.2990, 1.5000) node[pos=.5]{$d_0c_1$};
\draw(3.8971, 6.0000)--(2.5981, 3.7500) node[pos=.5]{$d_0c_2$};
\draw(3.8971, 1.5000)--(2.5981, -0.7500) node[pos=.5]{$d_1c_0$};
\draw(5.1962, 3.7500)--(3.8971, 1.5000) node[pos=.5]{$d_1c_1$};
\draw(6.4952, 6.0000)--(5.1962, 3.7500) node[pos=.5]{$d_1c_2$};
\draw(6.4952, 1.5000)--(5.1962, -0.7500) node[pos=.5]{$d_2c_0$};
\draw(7.7942, 3.7500)--(6.4952, 1.5000) node[pos=.5]{$d_2c_1$};
\draw(9.0933, 6.0000)--(7.7942, 3.7500) node[pos=.5]{$d_2c_2$};
\draw(9.0933, 1.5000)--(7.7942, -0.7500) node[pos=.5]{$d_3c_0$};
\draw(10.3923, 3.7500)--(9.0933, 1.5000) node[pos=.5]{$d_3c_1$};
\draw(11.6913, 6.0000)--(10.3923, 3.7500) node[pos=.5]{$d_3c_2$};
\draw(1.2990, 1.5000)--(2.5981, -0.7500) node[pos=.5]{$c_0c_0$};
\draw(2.5981, 3.7500)--(3.8971, 1.5000) node[pos=.5]{$c_0c_1$};
\draw(3.8971, 6.0000)--(5.1962, 3.7500) node[pos=.5]{$c_0c_2$};
\draw(3.8971, 1.5000)--(5.1962, -0.7500) node[pos=.5]{$c_1c_0$};
\draw(5.1962, 3.7500)--(6.4952, 1.5000) node[pos=.5]{$c_1c_1$};
\draw(6.4952, 6.0000)--(7.7942, 3.7500) node[pos=.5]{$c_1c_2$};
\draw(6.4952, 1.5000)--(7.7942, -0.7500) node[pos=.5]{$c_2c_0$};
\draw(7.7942, 3.7500)--(9.0933, 1.5000) node[pos=.5]{$c_2c_1$};
\draw(9.0933, 6.0000)--(10.3923, 3.7500) node[pos=.5]{$c_2c_2$};
 \end{tikzpicture}

}
\caption{The weighted subsystem Hamiltonian $W_{\mathcal{B}_3}$ on the slanted subgrid of the triangular lattice. Edges are labeled by their weights. This is the triangular-lattice analog of Figure \ref{fig:WB3grid}.}
\label{fig:WB3tri}
\end{center}
\end{figure}
\vspace{-.5cm}

As in the honeycomb case, we define the appropriate auxiliary operator by summing over squares of translated and rotated copies of these weighted Hamiltonians, call them $\{W_{B_t}^{(0)}, W_{B_t}^{(1)}, W_{B_t}^{(2)}\}$, where $t$ labels the possible translation in $\mathcal T_L$. We denote $H_{\mathbb T_L}\equiv H_L$. Combinatorial considerations yield
\begin{align*}
    \sum_{t\in\mathcal T_L}\sum_{r=0,1,2} (W_{B_l}^{(r)})^2 &= K_0 H_L
+ K_1\sum_{\begin{tikzpicture}
\draw[thick] (0,.15) -- (.3,.15);
\draw[thick] (.3,.15) -- (.6,.15);
\tkzDefPoint(0,.15){A}
\tkzDefPoint(.3,.15){B}
\tkzDefPoint(.6,.15){C}
\foreach \n in {A,B,C}
  \node at (\n)[circle,fill,inner sep=1pt]{};
\end{tikzpicture}} \{h_e, h_{e'}\} + K_2\sum_{\begin{tikzpicture}
\draw[thick](-.15, -.3) -- (.15, -.3);
\draw[thick](-.15, -.3) -- (0, -.03);
\node at (-.15,-.3)[circle,fill,inner sep=1pt]{};
\node at (.15,-.3)[circle,fill,inner sep=1pt]{};
\node at (0,-.03)[circle,fill,inner sep=1pt]{};
\end{tikzpicture}} \{h_e, h_{e'}\} 
+ K_3 \sum_{\begin{tikzpicture}
\draw[thick](-.15, -.3) -- (.15, -.3);
\draw[thick](-.15, -.3) -- (-.3, -.03);
\node at (-.15,-.3)[circle,fill,inner sep=1pt]{};
\node at (.15,-.3)[circle,fill,inner sep=1pt]{};
\node at (-.3,-.03)[circle,fill,inner sep=1pt]{};
\end{tikzpicture}} \{h_e, h_{e'}\} + R.
\end{align*}
Here the sums are over all edge pairings of the associated shape (irrespective of orientation) and the term $R$ contains anti-commutators over non-adjacent edges, which are positive semidefinite. Moreover, we introduced the effective coefficients
\begin{align*}
    K_0 &= 2\left( \sum_{i=0}^\ell d_i^2 \right) \left( \sum_{i=0}^{\ell-1}c_i^2\right) + \left( \sum_{i=0}^{\ell-1}c_i^2\right)^2 \\
    K_1 &= 2\left( \sum_{i=0}^\ell d_i^2 \right)\left( \sum_{i=0}^{\ell-2} c_ic_{i+1}\right) + \left( \sum_{i=0}^{\ell-2} c_ic_{i+1}\right)^2 \\
    K_2 &= 2 \left( \sum_{i=0}^{\ell-1}c_i^2\right)\left( \sum_{i=0}^{\ell-1} c_id_i\right) + \left( \sum_{i=0}^{\ell-1} c_id_i\right)^2 \\
    K_3 &= 2 \left(\sum_{i=0}^{\ell-2} c_ic_{i+1} \right)\left( \sum_{i=0}^{\ell-1} c_id_i\right) + \left( \sum_{i=0}^{\ell-1} c_id_i\right)^2
\end{align*}

Assuming that $K_2 \geq K_1, K_3$ and using the operator Cauchy-Schwarz inequality, we obtain
\beq\label{eq:recalltri}
\begin{aligned}
K_2 H_L^2 - \sum_{r=0}^2 \sum_t (W_{B_t}^{(r)})^2 &\geq (K_2-K_0)H_L + (K_2-K_1) \sum_{\begin{tikzpicture}
\draw[thick] (0,.15) -- (.3,.15);
\draw[thick] (.3,.15) -- (.6,.15);
\tkzDefPoint(0,.15){A}
\tkzDefPoint(.3,.15){B}
\tkzDefPoint(.6,.15){C}
\foreach \n in {A,B,C}
  \node at (\n)[circle,fill,inner sep=1pt]{};
\end{tikzpicture}}  \{h_e, h_{e'}\} 
+ (K_2-K_3)\sum_{\begin{tikzpicture}
\draw[thick](-.15, -.3) -- (.15, -.3);
\draw[thick](-.15, -.3) -- (-.3, -.03);
\node at (-.15,-.3)[circle,fill,inner sep=1pt]{};
\node at (.15,-.3)[circle,fill,inner sep=1pt]{};
\node at (-.3,-.03)[circle,fill,inner sep=1pt]{};
\end{tikzpicture}} \{h_e, h_{e'}\} \\
&\geq  (K_2-K_0)H_L - 2(K_2-K_1)H_L -4(K_2-K_3)H_L\\
&= (2K_1+4K_3-K_0-5K_2) H_L,
\end{aligned}
\eeq
where in the first step we also discarded the terms in $R$ using the assumption that the $c_i,d_i$ are increasing up until their midpoints, and hence will have coefficients less than $K_1,K_2,K_3$. 

By adapting the proof of Lemma \ref{lm:translationgrid} to the triangular lattice, we obtain a normalized state $|\phi\rangle$ in the $\gamma_L^{\mathrm{per}}$-eigenspace of $H_L$ such that
\beq\label{eq:translationtri}
\langle \phi | (W_{B_t}^{(r)})^2 | \phi \rangle 
\geq  
K_4 \gamma_\ell \langle \phi | W_{B_l}^{(r)} | \phi \rangle,
%\gamma_n\ \langle \phi | W_{B_l}^{(r)} | \phi \rangle \\
\qquad t\in\mathcal T_L,\quad r=0,1,2,
\eeq
with the new effective coefficient
$$
K_4=\min\left\{\frac{1}{\ell^2+\ell}\left(\sum_{i=0}^\ell d_i\right)\left(\sum_{i=0}^{\ell-1} c_i\right) , \frac{1}{\ell^2}\left(\sum_{i=0}^{\ell-1} c_i\right)^2\right\}.
$$
Taking the expectation of \eqref{eq:recalltri} in the state $|\phi\rangle$, using \eqref{eq:translationtri}, and calculating $ \sum_{r=0}^2 \sum_t W_{B_t}^{(r)}$ yields
$$
K_2 \gamma_L^{\mathrm{per}}\geq K_5 \gamma_\ell +2K_1+4K_3-K_0-5K_2
$$
with 
$$
K_5 = K_4\left(2\left(\sum_{i=0}^\ell d_i\right)\left(\sum_{i=0}^{\ell-1} c_i\right) +  \left(\sum_{i=0}^{\ell-1} c_i\right)^2\right).
$$
This yields the general form of the finite-size criterion
$$
\gamma_L^{\mathrm{per}} \geq \frac{K_5}{K_2}\left(\gamma_\ell - \left(\frac{K_0+5K_2-2K_1-4K_3}{K_5}\right)\right)
$$

It remains to choose the coefficients subject to the constraints (i)-(iii) and $K_2 \geq K_1, K_3$. Similarly to before, we take
\beq\label{eq:ciditri}
\begin{aligned}
c_j &= \ell+(\ell-1)j-j^2 \\
d_j &= (1-\lambda)(\ell+1+\ell j-j^2)  + \lambda\left(\frac{(\ell+2)^2}{4}\right)
\end{aligned}
\eeq
but now with $\lambda =  \frac{-30 + 20 \sqrt{5}}{11 \ell}$, again a parameter chosen through optimizing the leading coefficient of the relevant polynomials.

%Focusin g 
%$$
%\gamma_n \geq \frac{144}{5\ell^2} + O\left(\frac{1}{\ell^3}\right).
%$$
With explicit error estimates for the subleading terms, we obtain
$$
\gamma_L^{\mathrm{per}} \geq \frac{1}{2}\left(\gamma_\ell - \left(\frac{144}{5\ell^2} + \frac{432}{\ell^3}\right) \right) 
$$
as claimed. To obtain the values in Table \ref{tab:table4}, we instead use \eqref{eq:ciditri} and optimize over $\lambda$ for fixed $\ell$ subject to the given constraints. This completes the proof of Theorem \ref{thm:maintri}.
\qed

\end{document}